\documentclass[sigconf, nonacm]{acmart}

\newcommand\vldbdoi{XX.XX/XXX.XX}
\newcommand\vldbpages{XXX-XXX}
\newcommand\vldbvolume{14}
\newcommand\vldbissue{1}
\newcommand\vldbyear{2020}
\newcommand\vldbauthors{\authors}
\newcommand\vldbtitle{\shorttitle} 
\newcommand\vldbavailabilityurl{URL_TO_YOUR_ARTIFACTS}
\newcommand\vldbpagestyle{plain}

\usepackage[linesnumbered,ruled,noend]{algorithm2e}
\usepackage{subcaption}
\usepackage{enumitem}
\usepackage{graphicx}
\usepackage{amsmath}
\usepackage{amsthm}
\usepackage{array}
\usepackage{tcolorbox}
\usepackage{caption}
\usepackage{multirow}
\usepackage{setspace}
\usepackage{float}

\newtheoremstyle{mydefinition}
  {2pt}   
  {2pt}   
  {\normalfont}  
  {1em}      
  {\scshape} 
  {.}     
  { }     
  {\thmname{#1}\thmnumber{ #2}\thmnote{ (#3)}} 

\theoremstyle{mydefinition}
\newtheorem{definition}{Definition}
\newtheorem{example}{Example}

\newtheorem{lemma}{Lemma}

\renewenvironment{proof}[1][Proof]{\par\textsc{#1.}\hspace{0.3em}\ignorespaces}{\hfill $\square$\par}

\newcommand{\stitle}[1]{\noindent{\bf #1}}

\newcommand{\reffig}[1]{Figure~\ref{fig:#1}}
\newcommand{\refsec}[1]{Section~\ref{sec:#1}}
\newcommand{\refsubsec}[1]{Section~\ref{subsec:#1}}
\newcommand{\reftab}[1]{Table~\ref{tab:#1}}
\newcommand{\refalg}[1]{Algorithm~\ref{alg:#1}}

\newcommand{\refex}[1]{Example~\ref{ex:#1}}

\newcommand{\kw}[1]{{\ensuremath {\mathsf{#1}}}\xspace}

\newcommand{\Roam}[1]{\uppercase\expandafter{\romannumeral #1}}

\newcommand{\algqkhi}{\kw{Query}}
\newcommand{\alggs}{\kw{GreedySearch}}

\newcommand{\ms}{\kw{MSMarco}}
\newcommand{\yt}{\kw{Youtube}}
\newcommand{\dblp}{\kw{DBLP}}
\newcommand{\laion}{\kw{Laion}}
\newcommand{\rg}{iRangeGraph\xspace}

\newcommand{\kh}{KHI\xspace}
\newcommand{\ef}{\textit{ef}\xspace}
\newcommand{\mdim}{\kw{Dim}\xspace}
\newcommand{\mo}{\mathcal{O}}

\begin{document}
\title{Efficient Approximate Nearest Neighbor Search under Multi-Attribute Range Filter}

\author{Yuanhang Yu}
\affiliation{%
  \institution{Tongji University}
  \city{Shanghai}
  \country{China}
}
\email{yuanhangyu@tongji.edu.cn}

\author{Dawei Cheng}
\authornote{Corresponding author.}
\affiliation{%
  \institution{Tongji University}
  \city{Shanghai}
  \country{China}
}
\email{dcheng@tongji.edu.cn}

\author{Ying Zhang}
\affiliation{%
  \institution{Zhejiang Gongshang University}
  \city{Hangzhou}
  \country{China}
}
\email{ying.zhang@zjgsu.edu.cn}

\author{Lu Qin}
\affiliation{%
  \institution{University of Technology Sydney}
  \city{Sydney}
  \country{Australia}
}
\email{lu.qin@uts.edu.au}

\author{Wenjie Zhang}
\orcid{0000-0002-1825-0097}
\affiliation{%
  \institution{The University of New South Wales}
  \city{Sydney}
  \country{Australia}
}
\email{wenjie.zhang@unsw.edu.au}

\author{Xuemin Lin}
\orcid{0000-0002-1825-0097}
\affiliation{%
  \institution{Shanghai Jiao Tong University}
  \city{Shanghai}
  \country{China}
}
\email{xuemin.lin@sjtu.edu.cn}

\begin{abstract}
Nearest neighbor search on high-dimensional vectors is fundamental in modern AI and database systems.  
In many real-world applications, queries involve constraints on multiple numeric attributes, giving rise to range-filtering approximate nearest neighbor search (RFANNS).
While there exist RFANNS indexes for single-attribute range predicates, extending them to the multi-attribute setting is nontrivial and often ineffective.
In this paper, we propose \kh, an index for multi-attribute RFANNS that combines an attribute-space partitioning tree with HNSW graphs attached to tree nodes.
A skew-aware splitting rule bounds the tree height by $O(\log n)$, and queries are answered by routing through the tree and running greedy search on the HNSW graphs.
Experiments on four real-world datasets show that \kh consistently achieves high query throughput while maintaining high recall.  
Compared with the state-of-the-art RFANNS baseline, \kh improves QPS by $2.46\times$ on average and up to $16.22\times$ on the hard dataset, with larger gains for smaller selectivity, larger $k$, and higher predicate cardinality.
\end{abstract}

\maketitle

\pagestyle{\vldbpagestyle}
\begingroup\small\noindent\raggedright\textbf{PVLDB Reference Format:}\\
\vldbauthors. \vldbtitle. PVLDB, \vldbvolume(\vldbissue): \vldbpages, \vldbyear.\\
\href{https://doi.org/\vldbdoi}{doi:\vldbdoi}
\endgroup
\begingroup
\renewcommand\thefootnote{}\footnote{\noindent
This work is licensed under the Creative Commons BY-NC-ND 4.0 International License. Visit \url{https://creativecommons.org/licenses/by-nc-nd/4.0/} to view a copy of this license. For any use beyond those covered by this license, obtain permission by emailing \href{mailto:info@vldb.org}{info@vldb.org}. Copyright is held by the owner/author(s). Publication rights licensed to the VLDB Endowment. \\
\raggedright Proceedings of the VLDB Endowment, Vol. \vldbvolume, No. \vldbissue\ %
ISSN 2150-8097. \\
\href{https://doi.org/\vldbdoi}{doi:\vldbdoi} \\
}\addtocounter{footnote}{-1}\endgroup

\ifdefempty{\vldbavailabilityurl}{}{
\vspace{.3cm}
\begingroup\small\noindent\raggedright\textbf{PVLDB Artifact Availability:}\\
The source code, data, and/or other artifacts have been made available at xxx. 
\endgroup
}

\section{Introduction}
\label{sec:introduction}

Nearest neighbor search (NNS) is a fundamental problem in high-dimensional similarity search that has been extensively studied~\cite{bentley1975multidimensional,indyk1998approximate,li2019approximate,wang2021comprehensive}.
Given a query object $q$ and a set $O$ of data objects in a metric space $(S, \delta)$, 
nearest neighbor search aims to find an object $o^* \in O$ that minimizes $\delta(q,o^*)$.
Due to the curse of dimensionality~\cite{indyk1998approximate}, exact NNS is often impractical on large high-dimensional datasets, motivating extensive work on approximate nearest neighbor (ANN) search~\cite{malkov2014approximate,harwood2016fanng,malkov2018efficient,fu2019fast,jayaram2019diskann,chen2021spann,xie2025graph}.
In modern systems, objects such as images, text, and videos are often represented as high-dimensional vectors.
Consequently, ANN has become a fundamental operation in a wide range of applications, including information retrieval~\cite{liu2007survey,karpukhin2020dense}, machine learning~\cite{lewis2020retrieval,guu2020retrieval}, and vector database systems~\cite{johnson2019billion,wang2021milvus}.

Beyond vanilla ANN, there is increasing demand for ANN queries augmented with constraints on numeric attributes.
This trend is evident both in recent research~\cite{zuo2024serf,xu2024irangegraph} and in industrial systems from Apple~\cite{mohoney2023high}, Zilliz~\cite{wang2021milvus}, and Alibaba~\cite{wei2020analyticdb}, where ANN queries are routinely combined with metadata filters on attributes such as time and popularity.
In these setting, each data object consists of a feature vector together with a tuple of numeric attribute values, and queries are issued as a query vector accompanied by a range predicate over these attributes.
For example, in a scholarly search engine, a user may submit a paper abstract as a query vector and retrieve similar publications whose publication year and citation count both fall within user-specified numeric ranges.
In this paper, we study multi-attribute range-filtering approximate nearest neighbor search (RFANNS) in Euclidean space.

Existing RFANNS indexes~\cite{zuo2024serf,xu2024irangegraph} are primarily tailored to single-attribute range predicates.
The multi-attribute setting is more challenging and is not handled effectively by existing RFANNS indexes.
Specifically, the method in~\cite{zuo2024serf} does not admit a straightforward extension to multiple attributes, while the index proposed in~\cite{xu2024irangegraph} explicitly supports multi-attribute range predicates but suffers from efficiency limitations in this regime.
To address this challenge, one natural approach is to first partition the low-dimensional attribute space and then build graph indexes over the objects associated with each partition.
R-trees are a well-established choice for the partitioning structure, as they are widely used for multidimensional spatial indexing.
However, we argue that the substantial overlap among R-tree nodes degrades the quality of the graph indexes constructed over these partitions.

Motivated by these limitations, we propose \kh, an RFANNS index tailored for multi-attribute numeric range predicates.
Specifically, \kh combines a KD-tree–like partitioning tree $T$ over the attribute space with graph indexes attached to its nodes.
The partitioning tree recursively splits the data along individual attribute dimensions, producing non-overlapping partitions that adapt to attribute distributions.
To keep the tree height under control in the presence of skewed attributes, we adopt a skew-aware splitting rule that only accepts reasonably balanced splits, guided by a balance threshold $\tau$ and a small per-node exclusion set of problematic dimensions.
We show that this rule guarantees an $O(\log n)$ bound on the tree height, where $n$ is the number of data objects.
On top of this tree, each node $p\in T$ stores a single-level HNSW graph $G_p$ built on the object subset $\mo(p)$ covered by its partition.

To answer RFANNS queries efficiently, \kh\ leverages both the partitioning tree and the HNSW graphs.
Given a query $Q = (q,B)$, where $q$ is the query vector and $B$ is a range predicate that constrains the attribute values, we first traverse the tree to identify nodes whose attribute-space partitions intersect $B$ and select a small subset of them as entry points.
Starting from these entry points, the algorithm performs a greedy search.
When a candidate object is expanded, we reconstruct its neighborhood by aggregating incident edges from all HNSW graphs containing that object and retain only in-range neighbors whose attribute tuples satisfy $B$.
Together, these components enable efficient retrieval of the $k$ nearest neighbors satisfying the range predicate $B$.

To construct \kh, we first build the partitioning tree $T$ in the attribute space using the skew-aware splitting strategy.
Given $T$, we then generate single-level HNSW graphs for all tree nodes in a bottom-up fashion: at each leaf node $p$, we build the graph $G_p$ directly over its object set $\mo(p)$, and at internal nodes we reuse and incrementally merge the child graphs instead of rebuilding them from scratch.
To further reduce index construction time, we employ a hybrid parallelization strategy that combines level-wise parallelism across tree nodes with intra-node parallelism during graph merging within individual tree nodes.

In summary, our main contributions are as follows.
\begin{itemize}[topsep=0pt, leftmargin=1em]
\item To the best of our knowledge, this is the first work that specifically studies RFANNS with multi-attribute numeric range predicates in Euclidean space.

\item We propose \kh, a new RFANNS index that combines an attribute-space partitioning tree with HNSW graphs anchored at tree nodes, and design an efficient RFANNS query algorithm that uses only in-range neighbors while still achieving high recall and high query throughput (\refsec{kd_tree}).

\item We conduct an extensive experimental study on four real-world datasets.
The results demonstrate that at recall $0.95$ (or $0.9$), \kh\ improves QPS over a state-of-the-art RFANNS baseline by an average of $2.46\times$, and up to $16.22\times$ on the hard dataset. (\refsec{experiment}).
\end{itemize}

In addition, \refsec{preliminary} presents the problem definition and basic background, \refsec{related_work} provides a detailed discussion of related work, and \refsec{conclusion} concludes the paper. 

\section{Preliminaries}
\label{sec:preliminary}
\subsection{Problem Definition}
\label{subsec:pro_def}
Let $S$ be the $d$-dimensional space and $\delta$: $S\times S\to \mathbb{R}$ be the distance function over $S$.
The $k$-nearest neighbor ($k$-NN) search problem with numerous real-world applications is defined as follows:

\begin{definition}($k$-NN).
\label{def:nns}
Given a finite set $X\subseteq S$, a query vector $q\in S$, a distance function $\delta$ over $S$, and a positive integer $k\le n$, the $k$-nearest neighbor search returns a subset $R\subseteq X$ with $|R|=k$ such that $\forall x\in R$, $\forall y\in X\setminus R$: $\delta(x,q)\le \delta(y,q)$.
\end{definition}

In practice, $k$--NN search is often performed over vectors whose dimensionality ranges from tens to hundreds.
A notable example is retrieval-augmented generation (RAG)~\cite{lewis2020retrieval}.
As dimensionality increases, the curse of dimensionality~\cite{indyk1998approximate} renders exact nearest neighbor search computationally prohibitive.
Therefore, approximate nearest neighbor (ANN) search has attracted significant interest for achieving high efficiency at the cost of only a slight loss in accuracy.
This loss is commonly quantified by the recall metric: given a set $\hat{R}$ of size $k$ returned by an ANN algorithm, recall is defined as $|R \cap \hat{R}|/k$, where $R$ denotes the true $k$ nearest neighbors.

In real-world applications, objects represented as vectors are often associated with structured attributes.
For example, each product on e-commerce platforms can be represented by an image embedding and structured attributes (e.g., price and rating).
Unless otherwise stated, we use vector and embedding interchangeably for the high-dimensional representation of an object.
In this setting, one may desire to retrieve products that are similar in the embedding space while satisfying attribute filters, such as a given price range and a rating above $4$.
This type of search is known as the range-filtering nearest neighbor search (RFNNS).
To formally define RFNNS, we first present a schema-style description of the object set, analogous to a relation schema in relational databases.

\begin{definition}(Object Schema).
\label{def:ob}
An object schema is defined as a pair $(S, A)$, where (1) $S \subseteq \mathbb{R}^d$ denotes the embedding space; (2) $A=(a_1, \dots, a_m)$ specifies the attribute schema, where each $a_i$ is a numeric attribute.
\end{definition}

Given an object schema $(S, A)$, an object $o$ under this schema is represented as a pair $(x, t)$,
where $x \in S$ is a $d$-dimensional vector and $t = (t_1, \dots, t_m)$ is a tuple attribute values conforming to $A$.
An object set $O$ is an instance of $(S, A)$ if every $o \in O$ follows this representation.
Now, we formally define RFNNS as follows:

\begin{definition}(RFNNS).
\label{def:rfann}
Given an object schema $(S, A)$, its instance $O$, a distance function $\delta$ over $S$, and a positive integer $k$, an RFNNS query is defined as $Q=(q,B)$, where $q\in S$ is a query vector and $B$ is a range predicate.
Specifically, $B=\{b_i=[l_i,r_i]\mid i\in\mathcal{J}\}$ where $\emptyset \neq \mathcal{J} \subseteq \{1,\dots,m\}$ specifies a non-empty set of constrained attributes.
An object $o$ satisfies $B$, denoted as $o \models B$, if $l_i \le o.t.t_i \le r_i,\forall i\in\mathcal{J}$.
Let $O_{B}= \{o \in O \mid o \models B\}$ denote the filtered object set of $O$ under $B$.
The RFNNS problem is to find a set $R_Q \subseteq O_B$ with $|R_Q|=k$ such that for any
$o_x=(x,t_x)\in R_Q$ and $o_y=(y,t_y)\in O_B\setminus R_Q$, $\delta(x,q)\le \delta(y,q)$.
\end{definition}

\begin{figure*}[t]
    \centering
    \begin{subfigure}[b]{0.3\textwidth}
        \centering
        \includegraphics[width=0.99\linewidth]{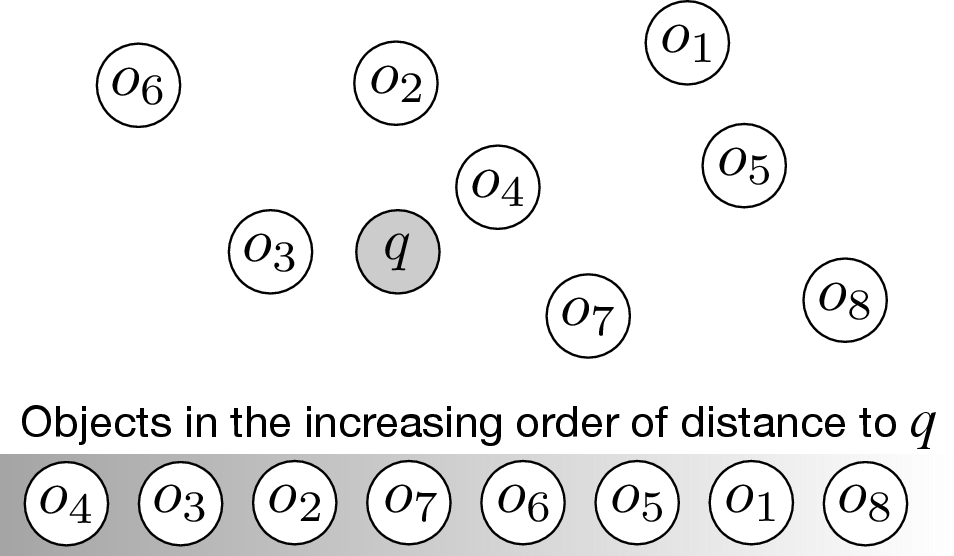}
        \caption{Vectors of an example object set and a query vector $q$.}
        \Description*{}
        \label{fig:emb_obj}
    \end{subfigure}
    \hspace{0.15\textwidth}
    \begin{subfigure}[b]{0.32\textwidth}
        \centering
        \includegraphics[width=0.99\linewidth]{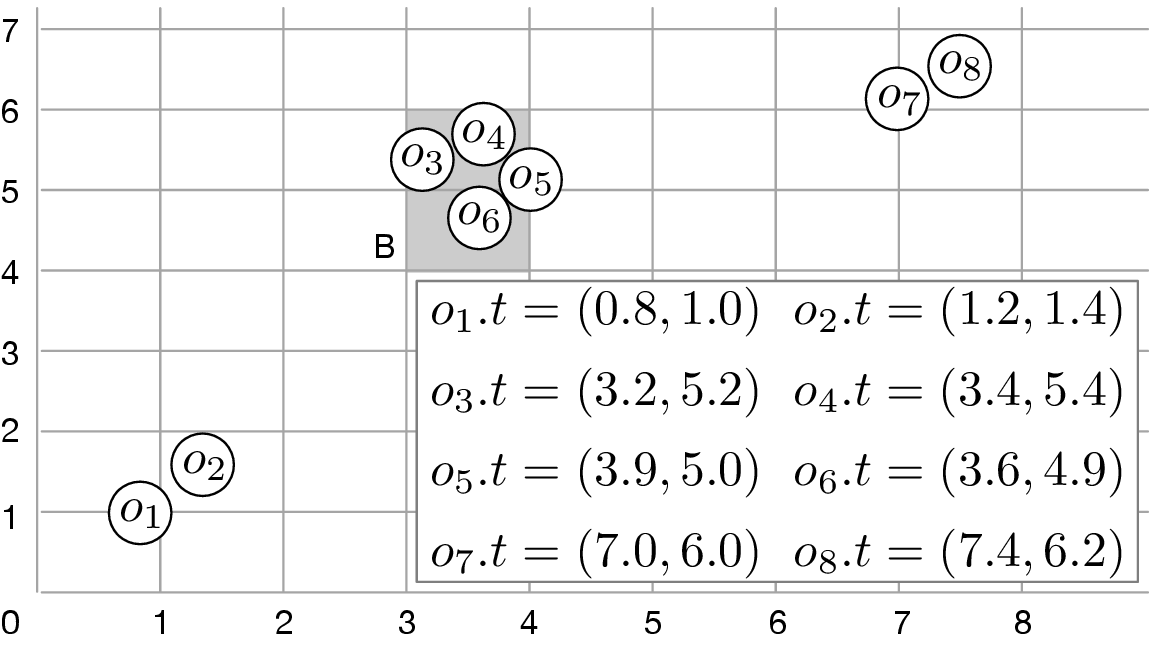}
        \caption{Tuples of an example object set and a range predicate $B$.}
        \Description*{}
        \label{fig:att_obj}
    \end{subfigure}
    \vspace{-1em}
    \caption{An example object set $O$ and an example query $Q$.}
    \label{fig:example_obj}
\end{figure*}

We refer to $|B| = |\mathcal{J}|$ as the cardinality of $B$, i.e., the number of constrained attributes.
To illustrate the above definitions, we consider~\refex{obj}.
\begin{example}
\label{ex:obj}
Given an object schema $(S,A)$ with $m=2$, its instance $O=\{o_i\mid i\in[1,8]\}$ is presented in~\reffig{example_obj}.
Consider an RFNNS query $Q=(q,B)$ with $k=2$, where $q$ and $B=\{b_1=[3.0,4.0],b_2=[4.0,6.0]\}$ are illustrated in~\reffig{emb_obj} and~\reffig{att_obj}, respectively.
$O_{B}=\{o_3,o_4,o_5,o_6\}$ based on their attribute tuples in~\reffig{att_obj}.
For example, $o_3\models B$ because $3.0\le o_3.t.t_1\le 4.0$ and $4.0\le o_3.t.t_2\le 6.0$.
As depicted in~\reffig{emb_obj}, since the vectors of objects $o_3$ and $o_4$ are closer to $q$ than that of $o_5$ and $o_6$, $R_Q=\{o_3, o_4\}$ is returned for the query $Q$.
\end{example}

Since RFNNS faces challenges analogous to exact $k$-NN, i.e., high computational cost at scale, we focus on the approximate variant of RFNNS, referred to as RFANNS, and throughout this paper we use the Euclidean distance as the distance function $\delta$.

\subsection{Graph-based ANN Solutions}
\label{subsec:graph}
We briefly review graph-based ANN methods, which are adopted as core modules of RFANNS methods due to their ability to achieve high recall while probing only a small fraction of the dataset.

Graph-based methods construct a proximity graph $G$ where each vertex $u$ corresponds to a data vector in the dataset.
Given a query vector $q$, they perform a greedy graph traversal, starting from an entry point and iteratively moving to neighbors that are closer to $q$ until no closer candidate can be found.
In practice, the search quality is controlled by a parameter \ef, which limits the size of closest elements maintained during search.
A larger \ef typically explores more vertices, thus improving recall at the cost of higher query time.
Note that \ef is independent of $k$: $k$ defines the query result size, while \ef controls the search breadth.
Existing graph-based methods~\cite{malkov2014approximate,harwood2016fanng,malkov2018efficient,li2019approximate,fu2019fast} primarily differ in their graph construction strategies, which in turn affect the efficiency and effectiveness of greedy navigation.
Among them, NSW~\cite{malkov2014approximate} and HNSW~\cite{malkov2018efficient} have been widely adopted, as they offer an excellent trade-off between search accuracy and efficiency at scale and often exhibit near-logarithmic search path length in practice.

\stitle{Navigable small world graphs.}
NSW builds the proximity graph $G$ by inserting vertices corresponding to data vectors in sequence.
For each inserted vertex $u$, it performs a greedy search on the current graph to obtain an approximate neighbor set $\hat{R}$ of size $M$, and then adds symmetric edges between $u$ and each $v\in \hat{R}$.
Here, the parameter $M$ specifies the number of neighbors linked for each inserted vertex, thus controlling the graph density.
Similarly, the result quality of construction-time search is controlled by the parameter $\ef_b$, which bounds the size of the candidate set explored during insertion.
After inserting all vertices, the construction terminates.
This incremental strategy yields a mix of two critical link types: short-range links that approximate local Delaunay neighborhoods~\cite{aurenhammer1991voronoi} to facilitate fine-grained local exploration around the query vector, and long-range links formed during early insertions to enable fast navigation from distant regions toward the vicinity of the query vector, which together promote small-world navigability and lead to strong empirical efficiency.

\stitle{Hierarchical navigable small world graphs.}
Building upon incremental construction of NSW, HNSW further introduces a hierarchical structure to improve scalability.
Inspired by skip lists, HNSW organizes the proximity graph into multiple layers, where each layer is an NSW-like graph.
The expected number of vertices increases exponentially from the top layer to the bottom layer, enabling a coarse-to-fine greedy traversal.
To reduce redundant connections, HNSW adopts a heuristic neighbor selection strategy inspired by the Relative Neighborhood Graph (RNG)~\cite{toussaint1980relative}.
Specifically, an edge $(u,v)$ is retained only if there exists no neighbor $v'$ of $u$ such that
$\delta(u,v') < \delta(u,v)$ and $\delta(v,v') < \delta(u,v)$; this prunes edges that are shielded by closer intermediate neighbors.
This strategy keeps the graph sparse while ensuring diverse directional connectivity, which is crucial for efficient greedy routing.
To control graph density, HNSW bounds the maximum degree of each vertex, using $M$ for the upper layers and $2M$ for the bottom layer.

\subsection{Existing RFANNS Solutions}
\label{subsec:rfann}
Existing RFANNS solutions~\cite{zuo2024serf,xu2024irangegraph} primarily focus on the single-attribute setting, i.e., $m = 1$, where the range predicate $B$ reduces to a query range $[l, r]$.
In this setting, these methods follow an intuitive paradigm that supports efficient search over the filtered object set $O_B$ under an arbitrary $B$.
Specifically, motivated by the strong empirical performance of HNSW, they seek to enable greedy search over $O_B$ as if an HNSW graph were built on the vectors in $O_B$.
We denote this HNSW graph as $G_{h}(O_B)$ and refer to any HNSW graph built on a subset of $O$ as a filtered HNSW graph.
Since constructing a filtered HNSW graph for every possible query range is prohibitive in both space and build time, these methods typically (i) locate an entry point within $O_B$, and (ii) recover, exactly or approximately, the neighbor list of each visited vertex $u$ on the fly, matching the neighbors that $u$ would have in $G_{h}(O_B)$.
Similarly, each vertex in the graph corresponds to an object in $O$.
Let $O$ be the object set with $|O| = n$.
Without loss of generality, we assume that its one-dimensional attribute domain is discretized and normalized to the integer range $[0, n{-}1]$.

\stitle{SeRF~\cite{zuo2024serf}.}
The discretized attribute domain yields $O(n^2)$ possible query ranges.
SeRF aims to losslessly compress the filtered HNSW graphs associated with these $O(n^{2})$ query ranges, thereby enabling exact on-the-fly reconstruction of the neighbor list for any visited vertex.
This is achieved by compressing neighbor lists across adjacent ranges with the same left boundary.

For a fixed left boundary $l$, as the right boundary $r$ increases from $l$ to $n{-}1$, the membership of a vertex $v$ in the neighbor list of another vertex $u$ is non-reentrant, i.e., $v$ enters and eventually exits without reappearing.
This behavior enables compact encoding via an inner interval $[t_s, t_e]$, such that $v$ is a neighbor of $u$ for all ranges $[l, r]$ with $r \in [t_s, t_e]$.
Moreover, an outer interval $[t_l, t_r]$ specifies the range of left boundaries $l$ for which the inner interval $[t_s, t_e]$ remains valid.
That is, for any $l \in [t_l, t_r]$, $v$ is a neighbor of $u$ in all query ranges $[l, r]$ where $r \in [t_s, t_e]$.

During query processing, SeRF leverages these intervals to exactly recover, on the fly, the neighbor list of each visited vertex for an arbitrary query range $B=[l,r]$, while an entry point in $O_B$ can be efficiently located by ordering objects according to their attribute values.

\stitle{\rg~\cite{xu2024irangegraph}.}
\rg supports approximate, rather than exact, on-the-fly neighbor reconstruction.
To this end, \rg constructs HNSW graphs over a moderate number of ranges, achieving provable bounds on index space and construction time.
Let us refer to a neighbor $v$ of a vertex $u$ as an in-range neighbor with respect to $B$ if $v \models B$; otherwise, $v$ is an out-of-range neighbor.

\rg organizes the attribute domain into subranges using a segment tree with $O(n)$ nodes, where each node corresponds to a subrange of $[0, n{-}1]$.
For each subrange, the corresponding filtered HNSW graph is built.
Given a query range $[l,r]$, \rg reconstructs the neighbor list of each visited vertex $u$ by aggregating its neighbors from the filtered HNSW graphs associated with the segment-tree nodes containing $u$, and retaining only in-range neighbors.
An entry point in $O_B$ can be efficiently obtained from the segment-tree nodes covered by $[l,r]$.

To extend its single-attribute index to multiple attributes, \rg introduces a probabilistic neighbor-reconstruction rule that retains out-of-range neighbors from each filtered HNSW graph with a decaying probability.
This relaxation enlarges the search space and thus improves result quality under multi-attribute range predicates, although the index is built on a single attribute.


\section{Motivation}
\label{sec:r_tree}

\begin{figure*}[t]
    \centering
    \begin{subfigure}[b]{0.4\textwidth}
        \centering
        \includegraphics[width=\linewidth]{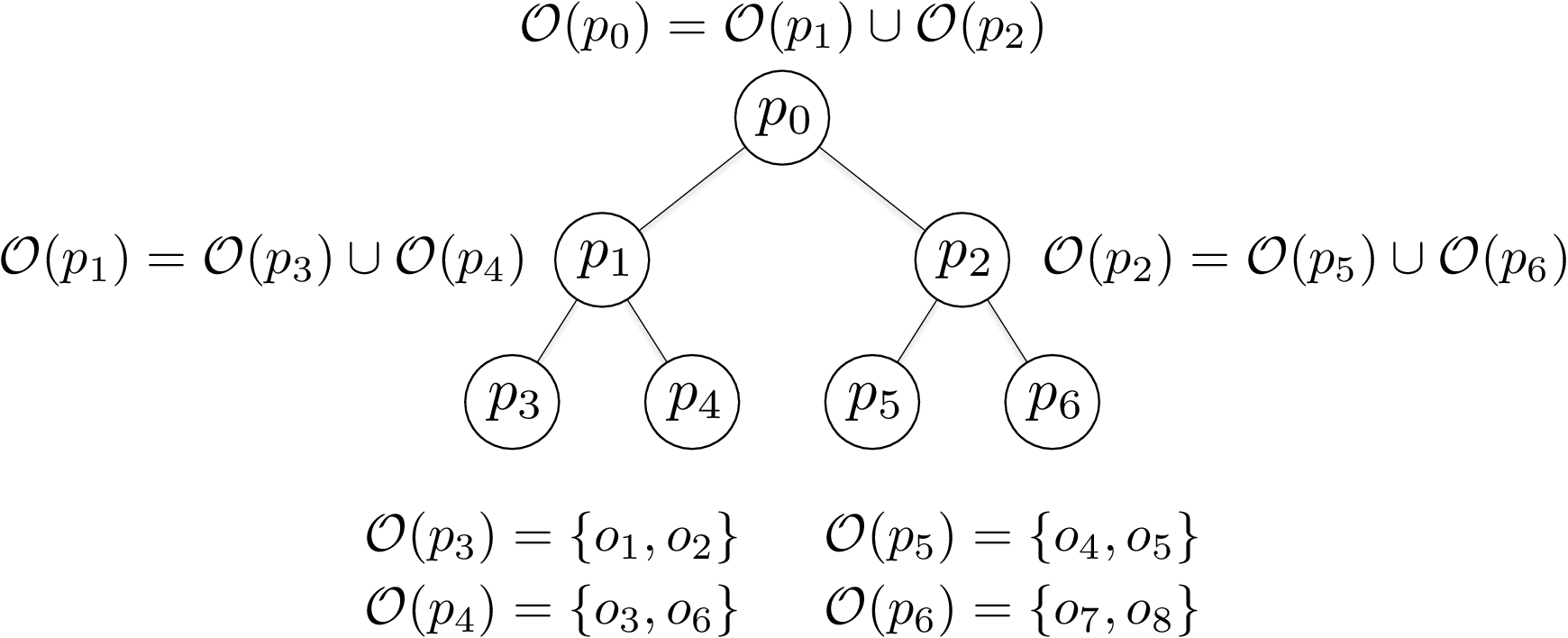}
        \caption{Example R-tree on the example object set.}
        \Description*{}
        \label{fig:RTree}
    \end{subfigure}
    \hspace{0.01\textwidth}
    \begin{subfigure}[b]{0.5\textwidth}
        \centering
        \includegraphics[width=\linewidth]{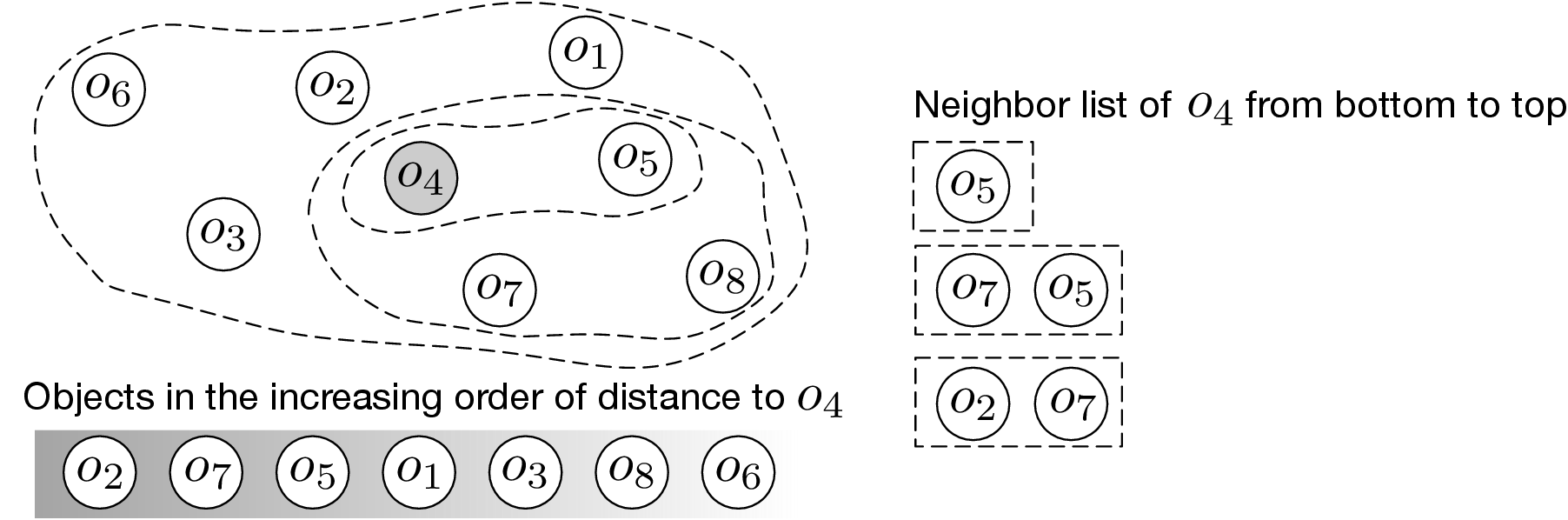}
        \caption{Neighbor lists of $o_4$ across R-tree levels.}
        \Description*{}
        \label{fig:nbr}
    \end{subfigure}
    \vspace{-1em}
    \caption{Running example illustrating neighbor lists of $o_4$ induced by R-tree partitions over the object set $O$.}
    \label{fig:example_RTree}
\end{figure*}

\subsection{Limitations of Existing RFANNS Solutions}
\label{subsec:motivation}
Existing RFANNS methods have demonstrated strong performance in the single-attribute setting.
However, their extensions to multiple attributes are either impractical or ineffective.
We detail these limitations below.

\stitle{Limitations of SeRF.}
SeRF achieves strong performance in the single-attribute setting by enabling exact on-the-fly neighbor reconstruction for each visited vertex.
However, this idea does not scale straightforwardly to multiple attributes.
As the number of attributes in $A$ increases, the space of possible range predicates $B$ can grow exponentially, which would lead to prohibitive index construction time and space overhead.
Therefore, SeRF becomes impractical in the multi-attribute setting.

\stitle{Limitations of \rg.}
\rg supports the multi-attribute setting through a probabilistic rule that allows the search to access out-of-range neighbors with a decaying probability.
This strategy lies between in-filtering and post-filtering: it relaxes the strict in-range constraint of in-filtering, while remaining more selective than post-filtering in exploring all out-of-range neighbors.

However, this relaxation also causes \rg to inherit a key drawback of post-filtering.
For practical multi-attribute RFANNS queries where a subset of attributes are constrained, the filtered set $O_{B}$ is often sparse in $O$, i.e., the selectivity $\sigma = |O_{B}|/|O|$ is low, so the neighbor lists reconstructed on the fly by \rg contain only a small fraction of in-range neighbors.
As a result, the search traverses many out-of-range objects, which slows the refinement of the best-so-far candidates and can even cause the number of visited objects to exceed $|O_{B}|$ when $\sigma$ is sufficiently low.
Our experiments in~\refsubsec{query} confirm that this makes \rg less effective for multi-attribute RFANNS queries, despite its practical utility as an extension of a single-attribute index.

\stitle{Implications.}
The above discussion shows that the multi-attribute setting poses two intertwined challenges that existing RFANNS methods do not sufficiently address.
First, the space of possible range predicates grows rapidly with the number of attributes, making exact on-the-fly neighbor reconstruction for arbitrary $B$ unsustainable.
This suggests approximating the neighbor lists in $G_{h}(O_B)$ using only a moderate number of selectively constructed filtered HNSW graphs.
Second, when the selectivity $\sigma$ is relatively low, out-of-range expansions tend to dominate the search cost.
Thus, an effective index should preserve strong search quality even when traversal is restricted to in-range neighbors.

These observations motivate the following design.
We treat the tuples of objects in $O$ as points in the attribute space and organize them with a spatial partitioning structure.
This structure induces a collection of subsets of $O$, and on each subset we build a corresponding filtered HNSW graph.
By maintaining a moderate number of such graphs, we aim to obtain a close approximate to, for an arbitrary multi-attribute range predicate $B$, the HNSW graph $G_{h}(O_B)$ and thereby achieve a practical balance between query performance and index cost.

\subsection{Limitations of R-tree}
\label{subsec:idx}
Given the above implications, our next step is to identify a spatial index that can partition the attribute space into subsets.
The R-tree is a natural candidate, as it is a classic index for organizing multi-dimensional points and supporting range queries.
We thus examine whether an R-tree over the tuples can induce partitions that are well aligned with our partition-driven graph construction.
As we show next, however, R-tree partitions do not consistently satisfy this requirement.
To avoid confusion, we use node for tree structures and vertex for graph structures.

\stitle{Overlapping MBRs.}
In an R-tree, each internal node stores the minimum bounding rectangles (MBRs) of its children, and the MBRs stored in the same internal node are allowed to overlap in order to maintain balance and compact bounding rectangles.
Thus, the portion of the attribute space indexed by an internal node is represented by overlapping MBRs rather than disjoint cells, and two objects that are close in the attribute space can be assigned to different child nodes of the same parent.

\stitle{Neighbor quality issues.}
In general, distances in the attribute space and in the embedding space are not necessarily aligned: objects that are far apart in the attribute space may still be close in the embedding space, and vice versa.
Since sibling MBRs in an internal R-tree node are allowed to overlap, consider a case where two objects that are close in the attribute space, say $o_u$ and $o_v$, fall into different child nodes $p_1$ and $p_2$ of the same parent $p_0$, while another object $o_w$ that is farther from $o_u$ in the attribute space falls into the same node $p_1$ as $o_u$.
At the same time, $o_w$ may be much closer to $o_u$ than $o_v$ in the embedding space.

Let $\mo(p)$ be the set of objects whose tuples fall within the MBR of the tree node $p$.
For brevity, we write $G_p = G_h(\mo(p))$.
When we build $G_{p_1}$, the neighbors of $o_u$ are restricted to the objects in $\mo(p_1)$.
Since $o_w$ is among the closest objects to $o_u$ in the embedding space within $\mo(p_1)$, $o_w$ becomes a neighbor of $o_u$.
When $G_{p_0}$ is constructed, $o_w$, which is close to $o_u$ in the embedding space, tends to remain in one of the neighbor slots of $o_u$. 
In contrast, $o_v$, which resides in the different child node $p_2$, is pruned due to the limited neighbor-list capacity of $o_u$ and will no longer be considered as a neighbor of $o_u$ in subsequent graph construction.
As a result, the index systematically misses attribute-close neighbors of $o_u$, such as $o_v$, degrading the quality of the local neighborhood around objects.
For a given range predicate $B$, queries may either fail to retrieve certain relevant objects in $O_B$ or reach them only via longer paths, thereby increasing search cost.

\begin{example}
\label{ex:r_tree}
Given the object set $O$ shown in~\reffig{example_obj}, an example R-tree built over $O$ is depicted in~\reffig{RTree}.
Since sibling MBRs in the R-tree may overlap, the closest pair of objects in the attribute space, $o_3$ and $o_4$, is assigned to two different leaf nodes, $p_4$ and $p_5$.
We focus on $o_4$ and its neighbor lists across the R-tree levels.

As shown in~\reffig{nbr}, at node $p_5$ we have $\mo(p_5)=\{o_4,o_5\}$, and the neighbor list of $o_4$ is $\{o_5\}$.
At its parent $p_2$, additional objects such as $o_7$ and $o_8$ are present; because $o_7$ and $o_5$ are closer to $o_4$ than $o_8$ in the embedding space, the neighbor list of $o_4$ becomes $\{o_7, o_5\}$.
At the root $p_0$, more objects including $o_2$ are considered, and the neighbor list of $o_4$ is updated to $\{o_2, o_7\}$.
Note that $o_3$ and $o_4$ first co-occur in the same object set only at $p_0$, i.e., $o_3, o_4 \in \mo(p_0)$.
However, by that time the two neighbor slots of $o_4$ are already occupied by $o_2$ and $o_7$, which are farther from $o_4$ than $o_3$ in the attribute space.
Consequently, the attribute-close neighbor $o_3$ is excluded from the neighbor list of $o_4$ at all levels.
For a range predicate $B$ such that $o_3, o_4 \in O_B$, a graph-based search that starts from $o_4$ and expands along its neighbors may therefore fail to retrieve $o_3$, even though $o_3$ is the closest attribute neighbor of $o_4$.
\end{example}

\stitle{Implications.}
These observations indicate that overlapping partitions in the attribute space distort the neighborhood structure of filtered HNSW graphs and harm search quality (see~\cite{khi-tech-report} for details).
Thus, for RFANNS, spatial indexes that induce non-overlapping partitions in the attribute space are therefore preferable.

\section{Our Solution}
\label{sec:kd_tree}


\begin{figure*}[t]
    \centering
    \begin{subfigure}[b]{0.35\textwidth}
        \centering
        \includegraphics[width=\linewidth]{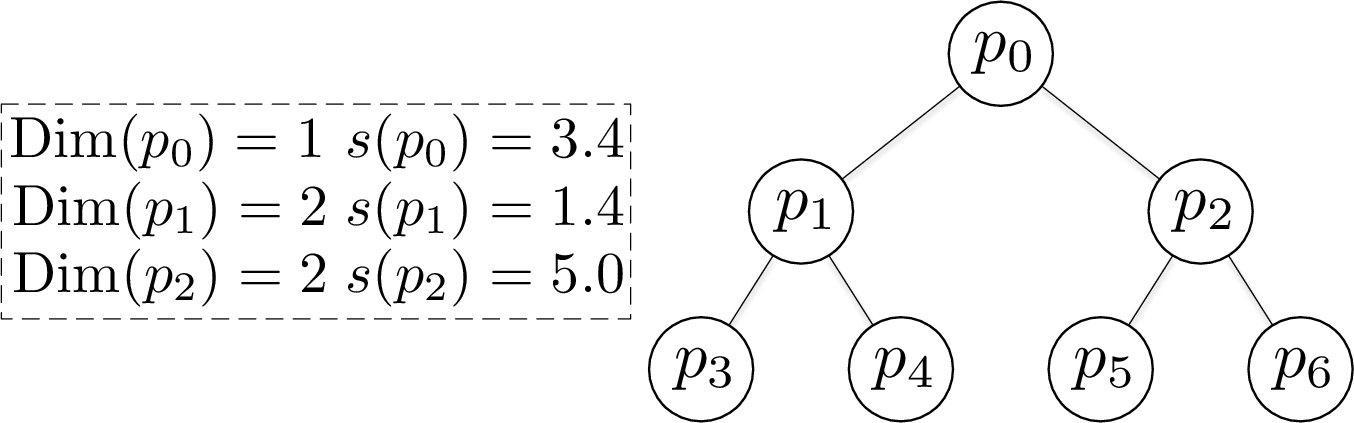}
    \caption{The partitioning tree of \kh on the object set $O$.}
    \Description*{}
    \label{fig:KDTree}
    \end{subfigure}
    \hspace{0.02\textwidth}
    \begin{subfigure}[b]{0.6\textwidth}
        \centering
        \includegraphics[width=\linewidth]{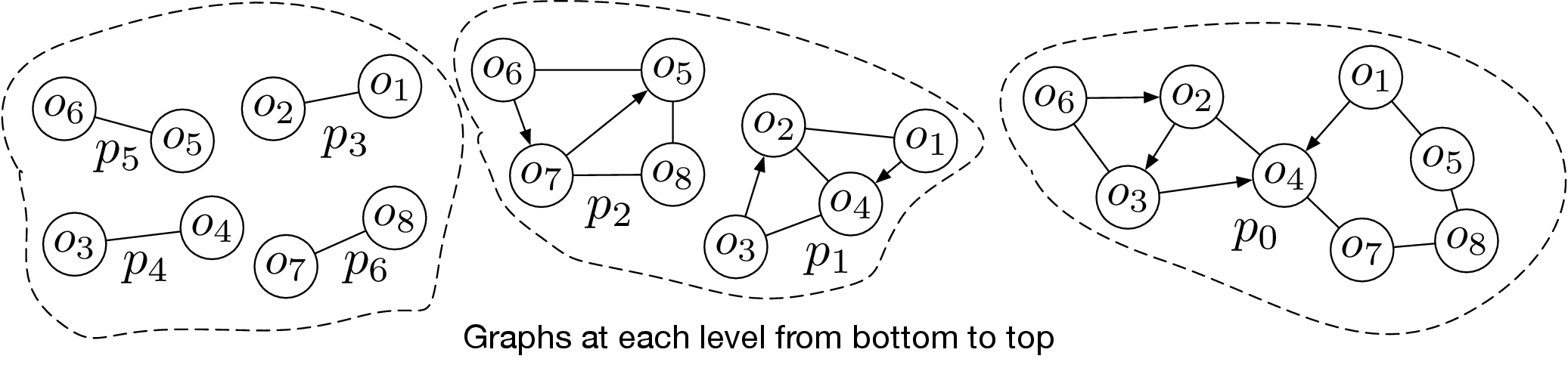}
        \caption{Graphs of \kh on the object set $O$.}
        \Description*{}
        \label{fig:hnsw_graphs}
    \end{subfigure}
    \vspace{-1em}
    \caption{Example of the \kh index over the object set $O$.}
    \label{fig:example_khi}
\end{figure*}

\subsection{Index Overview}
\label{subsec:kdtree}
We propose a KD-tree–HNSW hybrid index, denoted by \kh, which integrates a KD-tree–like partitioning tree on attribute tuples with filtered HNSW graphs over vectors.

\stitle{Attribute-space partitioning.}
\kh maintains a binary partitioning tree $T$ on the attribute tuples of $O$.
Each node $p$ in $T$ is associated with (1) an axis-aligned hyper-rectangle $\mathcal{R}(p)$ in the $m$-dimensional attribute space, and (2) an object subset $\mo(p) \subseteq O$, defined as
$\mo(p) = \{o \in O \mid o.t \in \mathcal{R}(p)\}$.
For the root node $p_r$, $\mathcal{R}(p_r)$ covers all attribute tuples and $\mo(p_r) = O$.
A node $p$ is a leaf if $|\mo(p)| \le c_l$, where $c_l$ is the leaf-capacity parameter (set to $2$ in our implementation).

When splitting an internal node $p$, \kh proceeds as follows.
\kh first chooses a splitting dimension $\mdim(p) \in \{1,\dots,m\}$ and a split value $s(p)$.
Along dimension $\mdim(p)$, we split the region $\mathcal{R}(p)$ at $s(p)$ into two disjoint child regions $\mathcal{R}(p_\ell)$ and $\mathcal{R}(p_r)$.
Objects with $o.t.t_{\mdim(p)} \le s(p)$ are assigned to the left child, and those with $o.t.t_{\mdim(p)} > s(p)$ to the right.
To keep the tree reasonably balanced, we follow a round-robin rule for choosing the splitting dimension:
$\mdim(p)$ is set to the next dimension after the splitting dimension of its parent node (wrapping around after $m$).
Given $\mdim(p)$, we collect the values $\{ o.t.t_{\mdim(p)} \mid o \in \mo(p) \}$ and set $s(p)$ to their median.
However, this median-based choice of $s(p)$ may still lead to a highly unbalanced partition on skewed data.
To mitigate this, we enforce a balance threshold $\tau > 1$.
Let $n_{L}$ and $n_{R}$ denote the numbers of objects assigned to the left and right child of $p$, respectively.
If $\frac{\max\{n_{L}, n_{R}\}}{\min\{n_{L}, n_{R}\}} \ge \tau$, we regard dimension $\mdim(p)$ as skewed at $p$, , exclude this dimension from consideration, and retry the split using the next available dimension.
Once excluded at $p$, a dimension is no longer used as a splitting dimension for any descendant of $p$.

\begin{lemma}
\label{lem:height}
The height of $T$ is $O(\log \frac{n}{c_l})$.
\end{lemma}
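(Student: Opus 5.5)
The plan is to show that every split actually performed shrinks the object set by a constant factor depending only on $\tau$. The height bound then follows from a standard geometric-decay argument along any root-to-leaf path. The key observation is that retrying with another dimension after a skewed split happens \emph{within the same node} $p$: exclusions change which dimension $\mdim(p)$ is finally used, but they never create extra tree levels. So it suffices to bound the child sizes of each accepted split.

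First, fix an internal node $p$ whose split is accepted along some non-excluded dimension. Let $N=|\mo(p)|=n_L+n_R$. Acceptance means $\max\{n_L,n_R\}<\tau\min\{n_L,n_R\}$. If $\min\{n_L,n_R\}=0$ the ratio is infinite (or undefined), so such a split is always rejected and both children are nonempty. Writing $x=\max\{n_L,n_R\}$, we get $x<\tau(N-x)$, hence $x<\frac{\tau}{1+\tau}N$. Put $\beta=\frac{\tau}{1+\tau}<1$. Then every child $p'$ of $p$ satisfies $|\mo(p')|\le\beta|\mo(p)|$.

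Second, consider a root-to-leaf path $p_0=p_r,p_1,\dots,p_h$. Each $p_i$ with $i<h$ is internal, so $|\mo(p_i)|>c_l$. Iterating the first step gives $c_l<|\mo(p_{h-1})|\le\beta^{h-1}n$. Solving yields $h\le 1+\log_{1/\beta}\frac{n}{c_l}=O(\log\frac{n}{c_l})$, since $\tau$ is a constant, so $\log(1/\beta)=\log(1+1/\tau)$ is a positive constant.

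The main obstacle is the case where every dimension has been excluded at some node $p$, i.e.\ no accepted split exists. The construction as stated must do something there, and I would handle it as follows. Exclusions are inherited by descendants, so along any path the exclusion set only grows and can grow at most $m$ times. Once all $m$ dimensions are excluded, the node is either made a leaf or split by a fallback rule. If it becomes a leaf, the path ends and the bound above is unaffected. If the fallback splits the objects by rank, e.g.\ at the median position of the sorted order on $\mdim(p)$ with ties broken arbitrarily, the children have sizes $\lceil N/2\rceil$ and $\lfloor N/2\rfloor$, which again satisfy the $\beta$-shrinkage because $\frac12\le\beta$. I would check the paper's precise convention for this degenerate case and verify that either reading preserves the per-level factor $\beta$, so the second step applies verbatim. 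This yields the claimed $O(\log\frac{n}{c_l})$ height, independently of $m$.
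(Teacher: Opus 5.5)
Your proof is correct and follows the same route as the paper: an accepted split leaves the larger child with fewer than $\frac{\tau}{\tau+1}|\mo(p)|$ objects, and geometric decay along any root-to-leaf path then bounds the depth. Your use of the last internal node, $c_l<|\mo(p_{h-1})|\le\beta^{h-1}n$, states the stopping step more precisely than the paper's loosely phrased $\rho^h n\le c_l$.

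For the degenerate case, Line~6 of \refalg{build_tree} turns any node with $|\mathsf{BL}(p)|=m$ into a leaf. So your first branch is the operative one, and the hypothetical rank-based fallback can be dropped. That fallback would in any case need care: for odd $N$, the claim $\lceil N/2\rceil\le\beta N$ requires $N\ge\frac{\tau+1}{\tau-1}$.
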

\begin{proof}[Proof sketch]
Let $N = |\mo(p)|$ for an internal node $p$, and assume $n_{L} \ge n_{R}$.
From $\frac{n_{L}}{n_{R}} < \tau$ and $n_{L} + n_{R} = N$, we obtain $n_{L} < \frac{\tau}{\tau+1} N$.
Hence, after each recursive split, the larger child has size at most $\rho N$, where $\rho = \frac{\tau}{\tau+1} < 1$.
Starting from $n$ objects at the root, the size of any node is at most $\rho^h n$ after $h$ levels.
The recursion stops once $|\mo(p)| \le c_l$, which implies $\rho^h n \le c_l$ and therefore $h \le \log_{1/\rho}\frac{n}{c_l} = O(\log \frac{n}{c_l})$.
\end{proof}

\stitle{Embedding-space graphs.}
For each node $p$ in $T$, \kh builds a filtered HNSW graph $G_p$.
The vertex set of $G_p$ consists of the objects in $\mo(p)$, and edges are constructed by the standard HNSW procedure.
In our implementation, each $G_p$ is a single-level HNSW graph ($L = 1$): the KD-tree–like hierarchy already provides the coarse-to-fine search structure, so additional HNSW layers are unnecessary.
We use $M$ to bound the maximum degree of each vertex in all single-level HNSW graphs.

\begin{example}
Consider the object set $O$ in~\reffig{example_obj} and the leaf capacity is $c_l = 2$.
The partitioning tree $T$ of \kh on $O$ is illustrated in~\reffig{KDTree}.
At the root $p_0$, \kh splits on the first attribute, i.e., $\mdim(p_0)=1$, with split value $s(p_0)=3.4$, the median of the first-attribute values in $\mo(p_0)$.
Objects with $o.t.t_1\le 3.4$ are assigned to the left child $p_1$, and the rest to the right child $p_2$, resulting in $|\mo(p_1)| = |\mo(p_2)| = 4$.
For $p_1$ and $p_2$, the splitting dimension advances to the second attribute, i.e., $\mdim(p_1)=\mdim(p_2)=2$, and the split values are set to $s(p_1)=1.4$ and $s(p_2)=5.0$, respectively.
Both splits are balanced, so no dimension is excluded, and all children $p_3$–$p_6$ satisfy $|\mo(p)| \le c_l$ and become leaves.

For each node $p$ in $T$, \kh builds a node-level graph $G_p$.
\reffig{hnsw_graphs} shows these graphs for $M = 2$: four attached to the leaves $p_3$–$p_6$ (each over two objects), two for $p_1$ and $p_2$ (each over four objects), and one for the root $p_0$ over all objects.
\end{example}

From the above example, we see that under the partitioning strategy of \kh, each leaf-level graph is built over a small subset of objects that are close in the attribute space.
For instance, $o_3$ and $o_4$, the closest pair in the attribute space, fall into the same leaf node and are linked as mutual neighbors in $G_{p_4}$.
When neighbor reconstruction is performed for a range predicate $B$, the locality of leaf-level graphs in attribute space makes them more likely to contribute in-range neighbors, which can, in many cases, help the reconstructed neighbor lists of visited objects more closely approximate those in the HNSW graph $G_{h}(O_B)$.

\begin{lemma}
The space complexity of \kh is $O(nM\cdot\log \frac{n}{c_l})$.
\end{lemma}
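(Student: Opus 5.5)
The plan is to split the index into its two components, the partitioning tree $T$ and the collection of single-level graphs $\{G_p\}_{p\in T}$. We bound each separately and show that the graphs dominate.

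\textbf{Step 1 (tree size).} Every internal node of $T$ has exactly two children, and splits are only accepted when both $n_L$ and $n_R$ are positive, since the ratio $\max/\min<\tau$ forces $\min\{n_L,n_R\}\ge 1$. Each leaf holds at least one object, and the leaves partition $O$, so there are at most $n$ leaves and hence $O(n)$ nodes overall.

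Each node stores $\mathcal{R}(p)$, $\mdim(p)$, $s(p)$ and an exclusion set of at most $m$ dimensions. Treating $m$ as a constant, this is $O(1)$ per node. The tree skeleton therefore costs $O(n)$, plus $O(n)$ to store the object data itself in the natural accounting where objects are stored once.

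\textbf{Step 2 (graphs, level by level).} Fix a depth $h$ of $T$. Nodes at the same depth have pairwise disjoint regions, because the splits are disjoint. Hence the sets $\mo(p)$ over all nodes $p$ at depth $h$ are disjoint, and their sizes sum to at most $n$.

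Each $G_p$ is single-level with maximum degree $M$, so it stores at most $M|\mo(p)|$ edges plus $O(|\mo(p)|)$ vertex bookkeeping. Summing over depth $h$ gives $O(nM)$ per level. Equivalently, each object $o$ lies in exactly one node per level, namely the nodes on its root-to-leaf path. It therefore appears in at most $\mathrm{height}(T)+1$ graphs, each contributing at most $M$ adjacency entries.

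\textbf{Step 3 (combine).} By Lemma~\ref{lem:height}, the number of levels is $O(\log\frac{n}{c_l})$. The total graph space is therefore $O(nM\log\frac{n}{c_l})$, which dominates the $O(n)$ tree cost. This yields the claimed bound.

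\textbf{Main obstacle.} The key step is Step 2, specifically the disjointness across a level. We should make explicit that it holds because child regions are disjoint and objects are assigned deterministically by $\le s(p)$ versus $> s(p)$.

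We should also remark on the skew-aware retry: an excluded dimension changes only which split is chosen, not the number of levels. The depth bound is already supplied by Lemma~\ref{lem:height}. A minor point is the degenerate case where no dimension yields a balanced split. We assume, as Lemma~\ref{lem:height} implicitly does, that such a node is simply made a leaf, which only reduces space.
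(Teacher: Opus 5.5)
Your proposal is correct and follows essentially the same argument as the paper. The tree has $O(n)$ nodes, each object lies in at most one node per level and hence in $O(\log\frac{n}{c_l})$ graphs by Lemma~\ref{lem:height}, and each such graph stores at most $M$ neighbors per object. Your observation that accepted splits leave both children nonempty, so $T$ has at most $n$ leaves and $O(n)$ nodes, fills in a step the paper only asserts.
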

\begin{proof}
By Lemma~\ref{lem:height}, the tree $T$ has height $O(\log \frac{n}{c_l})$.
Since each object belongs to exactly one node at each level of $T$, it participates in $O(\log \frac{n}{c_l})$ node-level filtered HNSW graphs in total.
In each single-level filtered HNSW graph, every object stores at most $M$ neighbors in its adjacency list.
Hence, the total space required by all filtered HNSW graphs is
$O(n M \cdot \log \frac{n}{c_l})$.
The tree $T$ contains $O(n)$ nodes and therefore occupies $O(n)$ space.
Thus, the overall space complexity of \kh is
$O(n M \cdot\log \frac{n}{c_l})$.
\end{proof}

\subsection{Query Processing}
\label{subsec:query}
Given an RFANNS query $Q = (q, B)$, \kh performs a greedy search over $O_B$.
Starting from an entry point in $O_B$, the search iteratively expands from the current object by aggregating its in-range neighbors from the filtered HNSW graphs.

\begin{algorithm}[t]
\caption{\kw{RangeFilter}$(T, B, c_e)$}
\label{alg:range_filter}
\KwIn{Partitioning tree $T$ of \kh with root $p_r$, range predicate $B$, and entry-point budget $c_{e}$}
\KwOut{A set $\mathcal{P}$ of entry points}
$\mathcal{C},\mathcal{P} \gets \emptyset;\mathsf{stack} \gets \{(p_r, \emptyset)\}$\;
\While{$\mathsf{stack} \neq \emptyset \land |\mathcal{C}| < c_e$}{
  $(p, D) \gets \mathsf{stack}.\textsf{pop}()$\;
  $D \gets D \cup \mathsf{BL}(p)$\;
  \lIf{$|D| = m$}{
    $\mathcal{C} \gets \mathcal{C} \cup \{p\};$ \textbf{continue}
  }
  \lIf{$p$ is a leaf}{\textbf{continue}}
  \If{$\mdim(p) \in D$}{
    \lForEach{child $p_c$ of $p$}{
      $\mathsf{stack}.\textsf{push}(p_c, D)$
    }
  }
  \Else{
    \ForEach{child $p_c$ of $p$}{
      $[l_c, r_c] \gets \pi_{\mdim(p)}(\mathcal{R}(p_c))$\;
      \lIf{$l_c > r_{\mdim(p)} \lor r_c < l_{\mdim(p)}$}{
        \textbf{continue}
      }
      \ElseIf{$l_c \ge l_{\mdim(p)} \land r_c \le r_{\mdim(p)}$}{
        $\mathsf{stack}.\textsf{push}(p_c, D \cup \{\mdim(p)\})$\;
      }
      \lElse{
        $\mathsf{stack}.\textsf{push}(p_c, D)$
      }
    }
  }
}
\ForEach{$p\in \mathcal{C}$}{
    \ForEach{$o\in \mo(p)$}{
        \lIf{$o.t\models B$}{
            $\mathcal{P}\gets \mathcal{P} \cup \{o\};$ \textbf{break}
        }
    }
}
\Return $\mathcal{P}$\;
\end{algorithm}

\stitle{Entry point selection.}
With a range predicate $B$ and an entry-point budget $c_e$ as input, \refalg{range_filter} constructs a set $\mathcal{P}$ of at most $c_e$ entry points for the subsequent greedy search.
It performs a depth-first traversal of $T$ to collect a small set of candidate nodes $\mathcal{C}=\{p\}$ such that objects satisfying $B$ can be efficiently located within their object sets $\mo(p)$.
It then scans $\mo(p)$ for each tree node $p \in \mathcal{C}$, adding at most one object $o \models B$ from each $\mo(p)$ to $\mathcal{P}$.
For notational convenience, any attribute not constrained in $B$ is treated as having the trivial query range $(-\infty, +\infty)$.

At initialization, the algorithm creates an empty candidate–node set $\mathcal{C}$, an empty entry–point set $\mathcal{P}$, and a stack seeded with the root node $p_r$ and an empty set of covered dimensions (Line~1).
In Lines 2–15, the algorithm performs a depth-first traversal over $T$ while the stack is non-empty and $|\mathcal{C}| < c_e$.
In each step, it pops a pair $(p, D)$ from the stack (Line 3).
Here, for each node $p$, the set $D$ consists of attribute dimensions $a_i \in A$ that fall into two categories:
(1) dimensions already fully covered by the range predicate, i.e., $\pi_i(\mathcal{R}(p)) \subseteq b_i$, where $\pi_i(\mathcal{R}(p))$ denotes the projection of the rectangle $\mathcal{R}(p)$ onto dimension $a_i$; and
(2) dimensions identified as severely skewed and therefore excluded from further splitting.
The algorithm then augments $D$ with the set of excluded dimensions $\mathsf{BL}(p)$ (Line~4).
Intuitively, dimensions in $\mathsf{BL}(p)$ have been identified as highly skewed at $p$, so further partitioning along them is discouraged.
If all dimensions are covered, i.e., $|D| = m$, the node $p$ is added to $\mathcal{C}$ and its subtree is not explored (Line~5).
If $p$ is a leaf and $|D| < m$, the node is skipped and the traversal continues with the remaining stack entries (Line~6), since not all dimensions in $\mathcal{R}(p)$ w.r.t. $B$ are sufficiently refined.
Otherwise, the algorithm proceeds to the children of $p$.

\begin{algorithm}[t]
\caption{$\kw{ReconsNbr}(\mathcal{I}, o, B, c_n, \mathsf{visited})$}
\label{alg:cons_nbr}
\KwIn{Index \kh $\mathcal{I}$, object $o$, range predicate $B$, neighbor budget $c_n$, and visited set $\mathsf{visited}$}
\KwOut{Reconstructed neighbor list $\mathcal{N}(o)$}
$\mathcal{N}(o)\gets \emptyset$\;
\For{$\ell \gets \mathcal{I}.T.h-1$ \kw{downto} $0$}{
    $p \gets \text{node at level }\ell\text{ such that }o \in \mo(p)$\;
    \ForEach{\kw{neighbor} $o_v$ \kw{of} $o$ \kw{in} $\mathcal{I}.G_p$}{
        \lIf{$\mathsf{visited}[o_v]$}{\textbf{continue}}
        $\mathsf{visited}[o_v] \gets \kw{true}$\;
        \lIf{$\lnot (o_v\models B)$}{\textbf{continue}}
        $\mathcal{N}(o) \gets \mathcal{N}(o) \cup \{o_v\}$\;
        \lIf{$|\mathcal{N}(o)| = c_n$}{
            \Return $\mathcal{N}(o)$
        }
    }
}
\Return $\mathcal{N}(o)$\;
\end{algorithm}

If $\mdim(p)\in D$, then the splitting dimension has already been treated as covered by $B$, and no further refinement along $\mdim(p)$ is needed. Consequently, all children of $p$ are pushed onto the stack with the same set $D$ (Lines 7–8).
If $\mdim(p)\notin D$, then for each child $p_c$ of $p$, the algorithm compares the interval $[l_c,r_c]=\pi_{\mdim(p)}(\mathcal{R}(p_c))$ with the range $[l_{\mdim(p)}, r_{\mdim(p)}]$ of $B$ on dimension $\mdim(p)$:
If $[l_c,r_c]$ does not intersect $[l_{\mdim(n_i)}, r_{\mdim(n_i)}]$, child $p_c$ is discarded (Line~12).
If $[l_c,r_c]$ is fully contained in $[l_{\mdim(n_i)}, r_{\mdim(n_i)}]$, then $p_c$ is pushed with the updated set $D \cup \{\mdim(p)\}$ (Lines~13–14); otherwise, $p_c$ is pushed with the original $D$ (Lines~15).
This depth-first exploration continues until the stack becomes empty or the number of candidate nodes reaches the budget $c_e$.
For each candidate node $p \in \mathcal{C}$, the algorithm designates the first object satisfying $B$ from $\mo(p)$ as an entry point (Lines~16–18).
The resulting set $\mathcal{P} \subseteq O_B$ is well spread over $O_B$ in the attribute space, providing diverse starting points for the subsequent greedy search and mitigating the risk of being trapped in poor local optima.

\stitle{On-the-fly neighbor reconstruction.}
\refalg{cons_nbr} reconstructs neighbors of an object $o$ w.r.t. a range predicate $B$.
The reconstructed neighbor list $\mathcal{N}(o)$ is initialized as empty (Line~1).
Starting from the root, it descends the tree;
at each level $\ell$, it locates the unique node $p$ such that $o \in \mo(p)$ and scans the neighbors of $o$ in $G_{p}$.
If a neighbor $o_v$ has already been marked as visited in previous expansions, it is skipped (Line~5); otherwise, $o_v$ is marked as visited and appended to $\mathcal{N}(o)$ if it satisfies $B$ (Lines~6-8).
The procedure terminates once $|\mathcal{N}(o)| = c_n$ or the leaf level is reached, and returns the neighbor list $\mathcal{N}(o)$.

This reconstruction strategy is deliberately simple: more sophisticated variants would add complexity, and their benefits tend to diminish as the number of attributes $m$ increases.
In contrast, aggregating in-range neighbors along the path of $o$ provides a direct and effective way to exploit the multi-level structure of \kh.

\begin{algorithm}[t]
\caption{$\algqkhi(\mathcal{I}, Q, k, \ef, c_e, c_n)$}
\KwIn{Index \kh $\mathcal{I}$, query $Q = (q, B)$, target size $k$, exploration factor \ef, entry-point budget $c_{e}$, and neighbor budget $c_n$}
\KwOut{An object set $\hat{R}$}
\label{alg:khi_q}
$\hat{R},\mathcal{C}_q,\mathsf{visited}\gets \emptyset$\;
$\mathcal{P}\gets\kw{RangeFilter}(\mathcal{I}.T, B, c_e)$\;
\ForEach{$o\in \mathcal{P}$}{ 
    $dist\gets\delta(o.x, q)$\;
    $\hat{R}.\textsf{push}(o, dist); \mathcal{C}_q.\textsf{push}(o, dist)$\;
    $\mathsf{visited}[o]\gets \kw{true}$\;
}
\While{$\mathcal{C}_q\neq\emptyset\land (|\hat{R}|<\ef\lor \mathcal{C}_q.\textsf{top}().dist\le\hat{R}.\textsf{top}().dist)$}{
    $o_u\gets \mathcal{C}_q.\textsf{pop}()$\;
    $\mathcal{N}(o_u)\gets \kw{ReconsNbr}(\mathcal{I}, o_u, B, c_n, \mathsf{visited})$\;
    \ForEach{$o_v\in \mathcal{N}(o_u)$}{
        $dist\gets\delta(o_v.x, q)$\;
        $\hat{R}.\textsf{push}(o_v, dist); \mathcal{C}_q.\textsf{push}(o_v, dist)$\;
        \lIf{$|\hat{R}|>\ef$}{$\hat{R}.\textsf{pop}()$}
    }
}
\Return{the $k$ closest objects in $\hat{R}$}\;
\end{algorithm}

\stitle{Query algorithm.}
\refalg{khi_q} summarizes the query procedure based on \kh $\mathcal{I}$.
Given a query $Q=(q,B)$ with target size $k$, exploration factor \ef, entry-point budget $c_e$, and neighbor budget $c_n$, the algorithm maintains two priority queues: a candidate queue $\mathcal{C}_q$ and a result queue $\hat{R}$ (Line~1).
Both queues store pairs $(o,\delta(o.x,q))$, ordered by distance to $q$;
$\mathcal{C}_q$ is a min-heap, whereas $\hat{R}$ is a bounded max-heap of capacity \ef.

The algorithm first invokes~\refalg{range_filter} to obtain a set $\mathcal{P}$ of entry points in $O_B$ (Line 2).
For each $o\in \mathcal{P}$, it computes $\delta(o.x, q)$, inserts $(o,\textit{dist})$ into both $\hat{R}$ and $\mathcal{C}_q$, and marks $o$ as visited (Lines~3–6).
While $\mathcal{C}_q$ is non-empty and either $|\hat{R}| < \ef$ or the best candidate in $\mathcal{C}_q$ is no farther than the worst object in $\hat{R}$, the algorithm extracts from $\mathcal{C}_q$ the current closest object $o_u$ (Line~8).
Its neighbors $\mathcal{N}(o_u)$ under $B$ are reconstructed by~\refalg{cons_nbr} (Line~9).
For each $o_v \in \mathcal{N}(o_u)$, the distance $\delta(o_v.x,q)$ is computed and $(o_v,\textit{dist})$ is inserted into both $\hat{R}$ and $\mathcal{C}_q$ (Lines~10–12).
If $|\hat{R}|$ exceeds \ef, the farthest object in $\hat{R}$ is removed to keep the result queue bounded (Line~13).
After the greedy search terminates, the $k$ closest objects in $\hat{R}$ are returned as the query result to $Q$ (Line~14).

In our implementation, we set $c_e = k$, since larger values bring only limited gains in search quality.
The neighbor budget is set to $c_n = M$, matching the maximum degree bound in the filtered HNSW graphs for the same reason.

\begin{algorithm}[t]
\caption{\kw{BuildTree}$(O,\tau,c_l)$}
\label{alg:build_tree}
\KwIn{Object set $O$, balance threshold $\tau$, leaf capacity $c_l$}
\KwOut{Partitioning tree $T$}
$p_r \gets$ \text{the root node of} $T$\;
$\mo(p_r)\gets O; \mdim(p_r)\gets 1; \mathsf{BL}(p_r)\gets \emptyset$\;
$\mathsf{stack}.\textsf{push}(p_r)$\;
\While{$\mathsf{stack}\neq\emptyset$}{
  $p \gets \mathsf{stack}.\mathsf{pop}()$\;
  \lIf{$|\mo(p)| \le c_l \lor |\mathsf{BL}(p)| = m$}{\textbf{continue}}
  \While{$\mdim(p) \in \mathsf{BL}(p)$}{
     $\mdim(p) \gets (\mdim(p) \bmod m) + 1$\;
  }
  sort objects in $\mo(p)$ increasingly by $o.t.t_{\mdim(p)}$\;
  set $s(p)$ as the lower median\;
  $\mo(p_l)\gets \{o\mid o.t.t_{\mdim(p)} \le s(p)\}$\;
  $\mo(p_r)\gets \mo(p)\setminus \mo(p_l)$\;
  \eIf{$\tau \cdot  \min(|\mo(p_l)|, |\mo(p_r)|) \le \max(|\mo(p_l)|, |\mo(p_r)|)$}{
    $\mathsf{BL}(p)\gets \mathsf{BL}(p)\cup \{\mdim(p)\}$\;
    $\mathsf{stack}.\mathsf{push}(p);$ \textbf{continue}\;
  }{
    $\mdim(p_l), \mdim(p_r) \gets (\mdim(p) \bmod m) + 1$\;
    $\mathsf{BL}(p_l), \mathsf{BL}(p_r)\gets \mathsf{BL}(p)$\;
    the left and right child of $p\gets (p_l, p_r)$\;
    $\mathsf{stack}.\mathsf{push}(p_l)$; $\mathsf{stack}.\mathsf{push}(p_r)$\;
  }
}
\Return{$T$}\;
\end{algorithm}

\subsection{Index Construction}
\label{subsec:idx_cons}
Considering the structure of \kh, index construction proceeds as follows.
First, we build the partitioning tree $T$ over all objects in the attribute space.
Then, leveraging the hierarchy of $T$, we construct a single-level filtered HNSW graph at each node to capture proximity relationships among the corresponding embeddings.

\stitle{Partitioning tree construction.}
We construct the partitioning tree $T$ in a top-down manner, as shown in~\refalg{build_tree}.
Starting from the root node $p_r$, each node $p$ stores a splitting dimension $\mdim(p)$ and a set of excluded dimensions $\mathsf{BL}(p)\subseteq \{1,\ldots,m\}$.
The construction proceeds through a stack-based traversal.

When a node $p$ is popped from the stack, partitioning stops if $|\mo(p)| \le c_l$ or all $m$ dimensions have been included in $\mathsf{BL}(p)$ (Line~6).
Otherwise, we advance $\mdim(p)$ in a round-robin manner until we find an eligible (non-excluded) dimension (Lines~7-8), and then determine the split value $s(p)$ on dimension $\mdim(p)$ (Lines~9-10).
Concretely, let $\{o.t.t_{\mdim(p)} \mid o \in \mo(p)\}$ be the multiset of attribute values on this dimension, sorted in ascending order.
We define $\mathsf{mid} = \lfloor (|\mo(p)| - 1)/2 \rfloor$ and choose $s(p)$ as the value at position $\mathsf{mid}$, which induces two subsets $\mo(p_l)$ and $\mo(p_r)$ for the left and right child, respectively (Lines~11–12).
We then check the balance condition.
If $\tau \cdot \min(|\mo(p_l)|,|\mo(p_r)|) \le \max(|\mo(p_l)|,|\mo(p_r)|)$, the split is considered skewed:
$\mdim(p)$ is added to $\mathsf{BL}(p)$ and $p$ is pushed back onto the stack to retry with another dimension (Lines~14–15).
Otherwise, we accept the split, create the two children, assign to each the next splitting dimension in the round-robin order, inherit the set of excluded dimensions, and push them onto the stack to continue partitioning (Lines~17–20).

\begin{algorithm}[t!]
\caption{$\kw{BuildGraph}(T, M, \ef_b)$}
\KwIn{Partitioning tree $T$, maximum degree bound $M$, and build exploration factor $\ef_b$}
\KwOut{Filtered HNSW graphs in \kh}
\label{alg:khi_cons}
\tcp{tree nodes ordered from leaves to root}
$\kw{queue}\gets\kw{BottomUpLevelTraversal}(T)$\;
\While{$\kw{queue}\neq \emptyset$}{
    $p\gets \kw{queue}.\kw{pop}()$\;
    \If{$p$ is a leaf} {
        build the filtered HNSW graph $G_{p}$\;
    } \Else {
        $(p_l, p_r)\gets$ the left and right child of $p$\;
        $G_{p}\gets G_{p_l}$\;
        \ForEach{$o\in \mo(p_r)$}{
            $\hat{R}\gets\alggs(o, \ef_b, G_{p})$\;
            $\mathcal{N}(o)$ in $G_{p}\gets \kw{Prune}(\hat{R}\cup\mathcal{N}(o)$ in $G_{p_r})$\;
            \ForEach{$\hat{o}\in \mo(p_l)\cap\mathcal{N}(o)$ in $G_{p}$}{
                $\mathcal{N}(\hat{o})$ in $G_{p}\gets\kw{Prune}(\{o\}\cup\mathcal{N}(\hat{o})$ in $G_{p})$\;
            }
        }
    }
}
\Return{$\{G_{p}\}$}\;
\end{algorithm}
\stitle{Filtered HNSW graph construction.}
Given the partitioning tree $T$, we construct for each node $p$ a single-level filtered HNSW graph $G_{p}$ in a bottom-up manner (\refalg{khi_cons}).
We first build a queue of tree nodes ordered from leaves to root: nodes are grouped by depth, and the levels are enqueued from the deepest level up to the root, ensuring that every child is processed before its parent (Line~1).

If the dequeued node $p$ is a leaf, we build $G_{p}$ by invoking the standard HNSW building procedure with maximum degree bound $M$ (Lines~4–5).
Otherwise, we set $G_{p}$ to $G_{p_l}$ (Line~8), and then incrementally merge the objects in $\mo(p_r)$ into $G_{p}$.
For each object $o \in \mo(p_r)$, we run a greedy search on the current $G_{p}$ to obtain a candidate set $\hat{R}$ of up to $\ef_b$ objects (Line~10).
Its neighbor list in $G_{p}$ is then determined by applying the RNG-based heuristic from~\refsubsec{graph} to the union of candidates from the parent and right-child graphs, obtaining an $M$-bounded neighbor list while retaining high-quality neighbors (Line~11).
Finally, we update neighbor lists of affected objects in $\mo(p_l)$: for each $\hat{o} \in \mo(p_l) \cap \mathcal{N}(o)$ in $G_{p}$, we refine $\mathcal{N}(\hat{o})$ in $G_{p}$ by applying the same RNG-based pruning to ${o} \cup \mathcal{N}(\hat{o})$ in $G_{p}$ (Lines~12–13).
Processing all nodes in this bottom-up order yields the collection $\{G_{p}\}$ of graphs attached to the nodes of \kh (Line~14).
In our implementation, we set $\ef_b=M$.

\stitle{Parallel graph construction.}
In the overall index construction, building the filtered HNSW graphs dominates the runtime because it requires many distance computations between high-dimensional embeddings (see~\cite{khi-tech-report} for details).
To reduce construction time, we parallelize graph construction while keeping the resulting graphs identical to the sequential version.

Recall that the nodes of $T$ are grouped by depth in a bottom-up order.
Since the construction of graph $G_{p}$ at a node $p$ only depends on the graphs of its children, all nodes on the same level can be processed independently, so we first build graphs in a level-wise parallel fashion.
However, near the upper levels of the tree, the number of nodes per level becomes small and the workload across nodes can become imbalanced, limiting the benefit of purely level-wise parallelization.
To address this, we introduce a threshold $\tau_p$ on the number of nodes per level: when a level contains fewer than $\tau_p$ nodes, we switch from level-wise parallelism to intra-node parallelism.
In this case, each remaining node $p$ typically has a large $\mo(p_r)$, and we parallelize the merge step for objects in $\mo(p_r)$ (Lines~9–13 of~\refalg{khi_cons}) by distributing these objects across worker threads, each independently running the greedy search and RNG-based pruning to update the corresponding neighbor lists in $G_{p}$.
In our implementation, we set $\tau_p = 100$.

\begin{table*}[t!]
\centering
\caption{Datasets.}
\vspace{-1em}
\label{tab:data}
\begin{tabular}{l|l|l|l|l|l}
    \hline
    Dataset & $n$ & $d$ & $m$ & Vector Type & Attributes \\ \hline \hline
    Youtube & 3,650,716 & 1,024 & 4 & Video & PublishYear, \#Views, \#Likes, \#Comments \\ \hline
    DBLP & 6,357,867 & 768 & 4 & Text & PublishYear, \#Citations, \#References, \#Authors  \\ \hline
    MSMarco & 8,000,000 & 384 & 5 & Text & \#Words, \#Chars, \#Sentences, \#UniqueWords, TFIDF  \\ \hline
    LAION & 9,636,707 & 512 & 3 & Image & Width, Height, Similarity  \\ \hline
\end{tabular}
\vspace{-1em}
\end{table*}

\stitle{Discussion.}
Our partitioning tree $T$ is inspired by classical KD-trees but departs from standard variants in several key aspects.
Each node maintains a set of excluded dimensions $\mathsf{BL}(p)$ and uses a balance threshold $\tau$ to avoid highly skewed splits, which yields a provable bound on the tree height.
Moreover, $T$ is tailored to RFANNS: it is built purely in the attribute space, and its nodes serve as anchors for filtered HNSW graphs rather than for traditional point-location queries as in standard KD-trees.
These design choices are driven by the requirements of RFANNS and, to the best of our knowledge, have not been explored in existing KD-tree–based indexes.

\section{Experiments}
\label{sec:experiment}
\begin{figure*}[htb]
\centering
    \vspace{-1em}
    \begin{minipage}{0.59\textwidth}
        \centering
        \includegraphics[width=0.99\hsize]{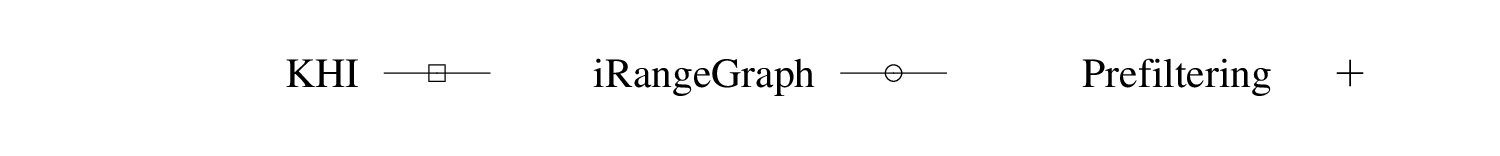}
    \end{minipage}\\
    \vspace{-1.5em}
    \begin{subfigure}[b]{0.2\textwidth}
        \centering
        \includegraphics[width=0.99\hsize]{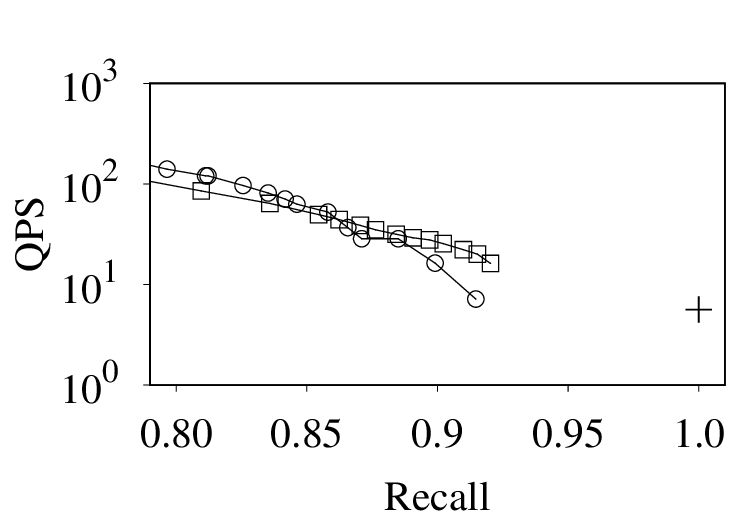}
        \vspace{-2em}
        \caption{\yt ($\sigma$=1/16)}
    \end{subfigure}
    \begin{subfigure}[b]{0.2\textwidth}
        \centering
        \includegraphics[width=0.99\hsize]{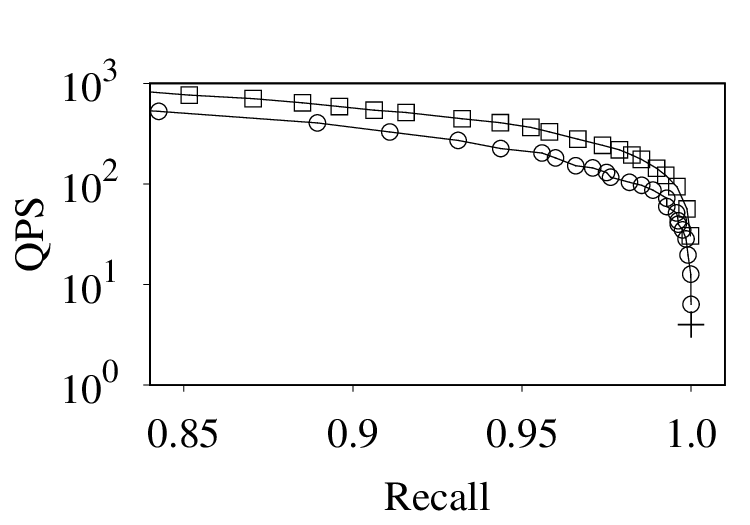}
        \vspace{-2em}
       \caption{\dblp ($\sigma$=1/16)}
    \end{subfigure}
    \begin{subfigure}[b]{0.2\textwidth}
        \centering
        \includegraphics[width=0.99\hsize]{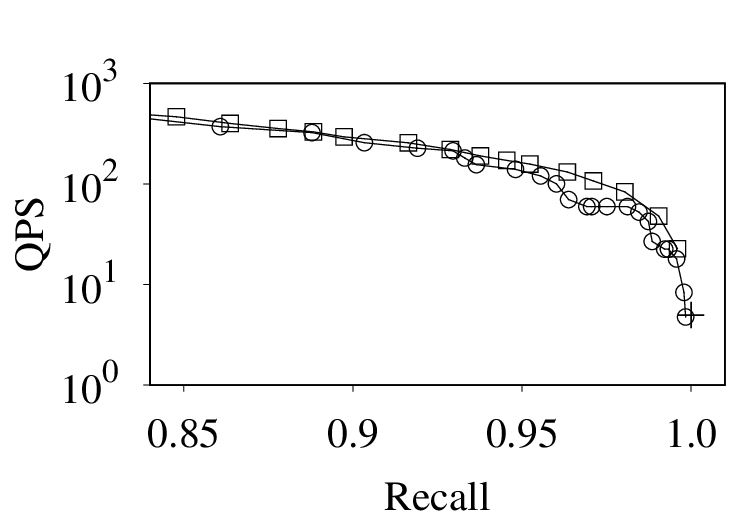}
        \vspace{-2em}
        \caption{\ms ($\sigma$=1/16)}
    \end{subfigure}
    \begin{subfigure}[b]{0.2\textwidth}
        \centering
        \includegraphics[width=0.99\hsize]{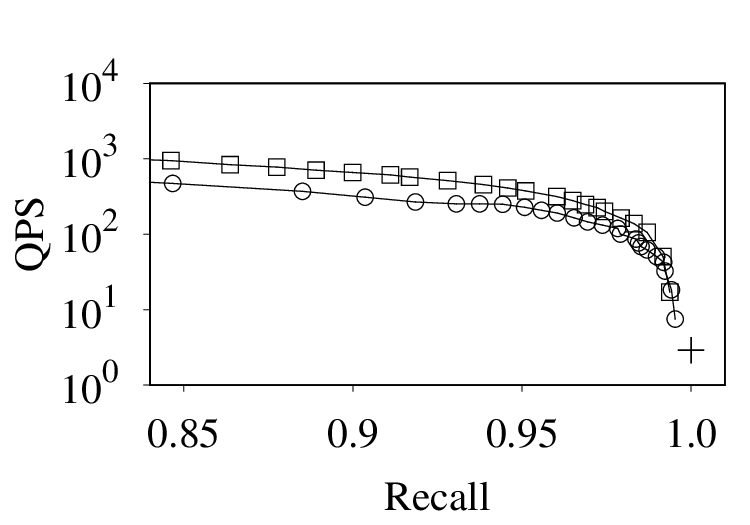}
        \vspace{-2em}
       \caption{\laion ($\sigma$=1/16)}
    \end{subfigure}\\
    \begin{subfigure}[b]{0.2\textwidth}
        \centering
        \includegraphics[width=0.99\hsize]{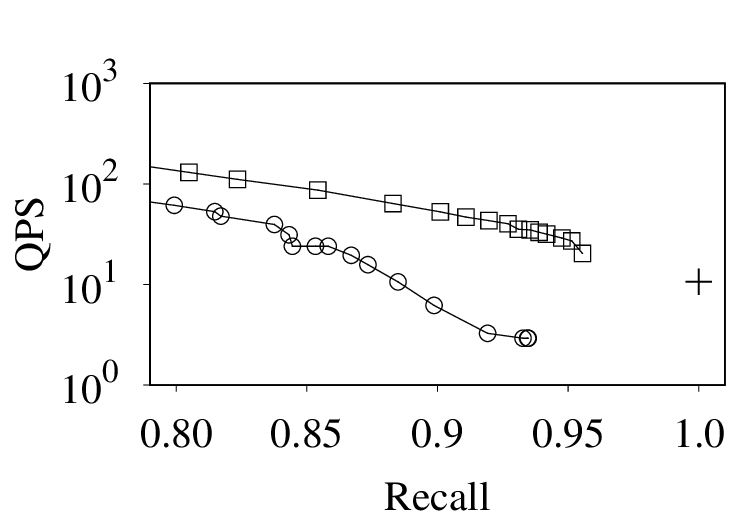}
        \vspace{-2em}
        \caption{\yt ($\sigma$=1/64)}
        \label{fig:exp:overall_yt_64}
    \end{subfigure}
    \begin{subfigure}[b]{0.2\textwidth}
        \centering
        \includegraphics[width=0.99\hsize]{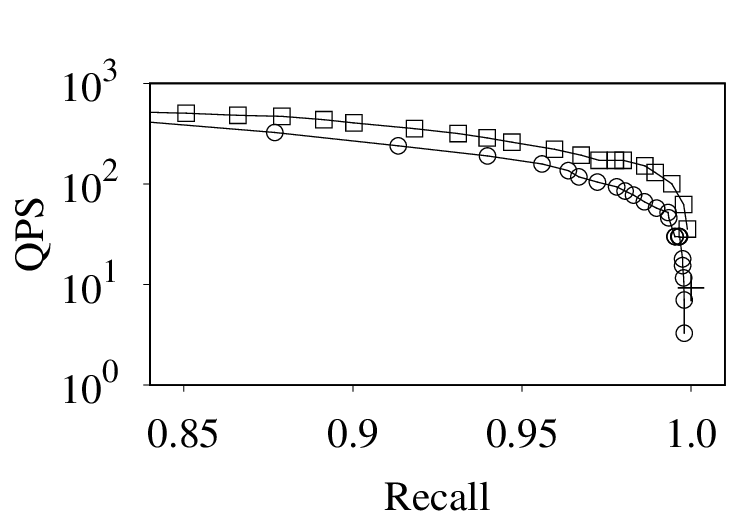}
        \vspace{-2em}
       \caption{\dblp ($\sigma$=1/64)}
    \end{subfigure}
    \begin{subfigure}[b]{0.2\textwidth}
        \centering
        \includegraphics[width=0.99\hsize]{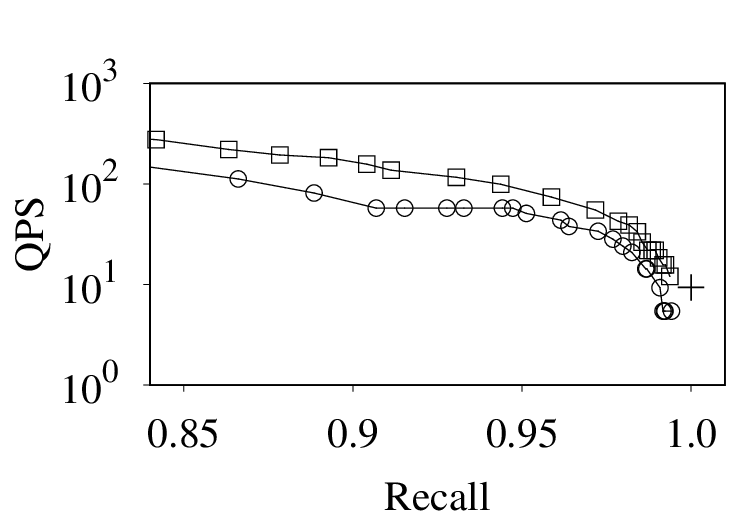}
        \vspace{-2em}
        \caption{\ms ($\sigma$=1/64)}
    \end{subfigure}
    \begin{subfigure}[b]{0.2\textwidth}
        \centering
        \includegraphics[width=0.99\hsize]{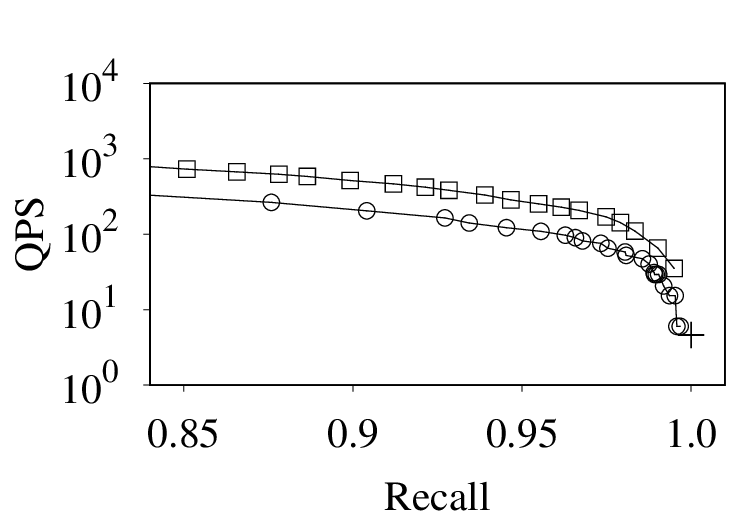}
        \vspace{-2em}
       \caption{\laion ($\sigma$=1/64)}
    \end{subfigure}\\
    \begin{subfigure}[b]{0.2\textwidth}
        \centering
        \includegraphics[width=0.99\hsize]{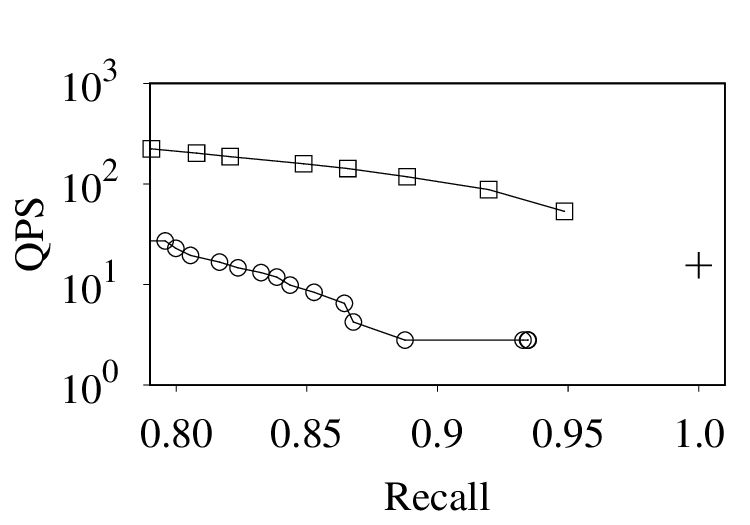}
        \vspace{-2em}
        \caption{\yt ($\sigma$=1/256)}
        \label{fig:exp:overall_yt_256}
    \end{subfigure}
    \begin{subfigure}[b]{0.2\textwidth}
        \centering
        \includegraphics[width=0.99\hsize]{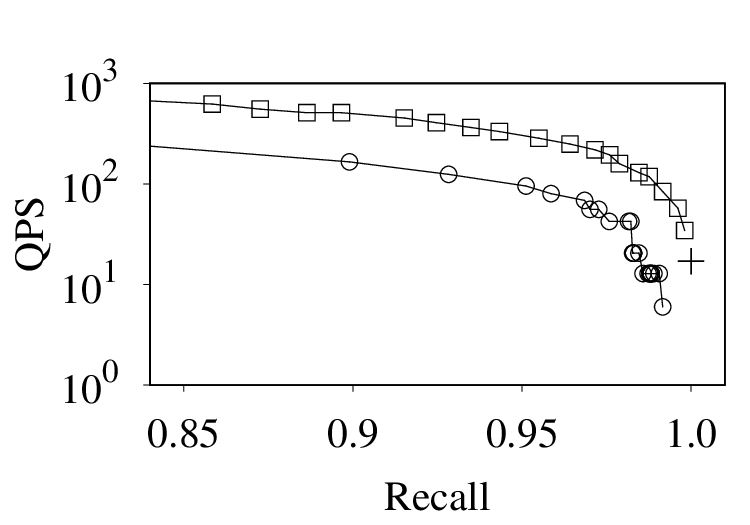}
        \vspace{-2em}
       \caption{\dblp ($\sigma$=1/256)}
    \end{subfigure}
    \begin{subfigure}[b]{0.2\textwidth}
        \centering
        \includegraphics[width=0.99\hsize]{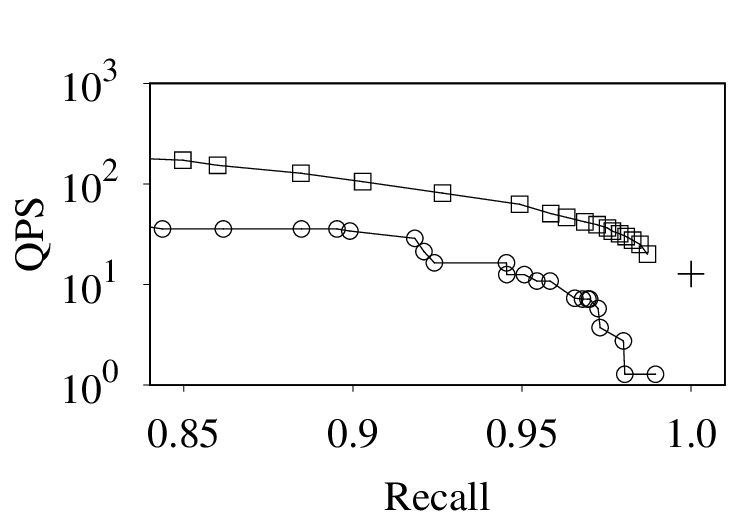}
        \vspace{-2em}
        \caption{\ms ($\sigma$=1/256)}
    \end{subfigure}
    \begin{subfigure}[b]{0.2\textwidth}
        \centering
        \includegraphics[width=0.99\hsize]{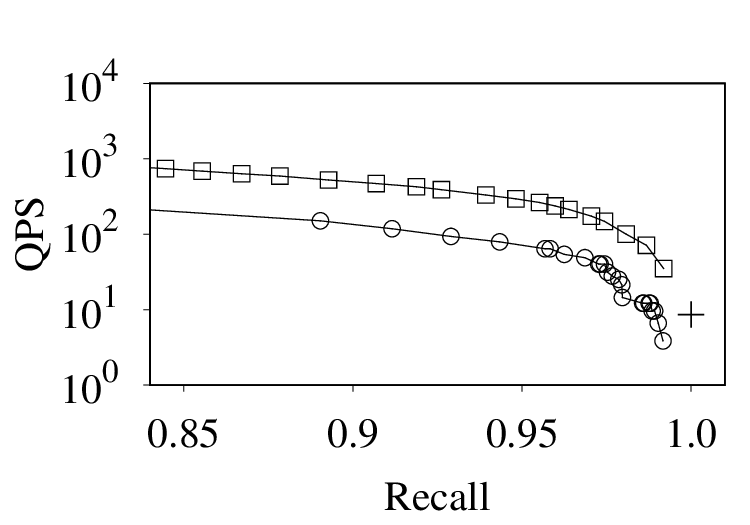}
        \vspace{-2em}
       \caption{\laion ($\sigma$=1/256)}
    \end{subfigure}
    \vspace{-1em}
    \caption{Overall query performance.}
    \label{fig:exp:overall_q}
    \Description*{}
    \vspace{-1em}
\end{figure*}

\subsection{Experimental Setup}
\label{subsec:setup}
Our experimental evaluation focuses on two aspects: query performance and index efficiency.
For query performance, we compare our method against baselines under various query settings.
For index efficiency, we report the index construction time and the empirical space usage of our method and baselines.

\stitle{Datasets.}
To evaluate the performance of all methods, we use four real-world public datasets.
Since few public corpora provide both multi-attribute numeric metadata and corresponding vector representations, we construct our evaluation datasets by encoding the raw content with open-source embedding models and pairing the resulting vectors with the original attributes.
The resulting datasets are summarized in~\reftab{data}.
\laion\footnote{\url{https://laion.ai/blog/laion-400-open-dataset/}} contains image–text pairs with three numeric attributes: image width, image height, and a similarity score.
\ms\footnote{\url{https://microsoft.github.io/msmarco/}} and \dblp\footnote{\url{https://open.aminer.cn/open/article?id=655db2202ab17a072284bc0c}} are text collections with document-level statistics such as the number of words, sentences, and citations.
\yt\footnote{\url{https://research.google.com/youtube8m/}} consists of video records with temporal and popularity metadata, including publication year, the number of views, and the number of likes.

\stitle{Algorithms.}
Among existing RFANNS indexes, we consider two representative methods, SeRF and iRangeGraph, as introduced in~\refsubsec{rfann}.
SeRF is designed for single-attribute range filters. Although its paper outlines a possible extension to multi-attribute RFANNS, this variant is only described at a high level and, to the best of our knowledge, has not been implemented.
Since our focus is on multi-attribute RFANNS, we do not evaluate SeRF experimentally and instead use iRangeGraph as the main RFANNS baseline.

In addition to iRangeGraph, we also include a simple pre-filtering baseline.
Given a query $Q=(q,B)$, \kw{Prefiltering} first scans all objects to materialize the subset $O_B$.
It then performs exact nearest neighbor search for $q$ over $O_B$ in the embedding space by exhaustively computing distances and returning the $k$ closest objects.

\stitle{Queries.}
For each dataset, we generate $1{,}000$ RFANNS queries of the form $Q = (q,B)$.
During dataset construction, we randomly sample $1{,}000$ raw objects from the original corpus and encode them with the same embedding model used to build the dataset; these vectors are stored separately and used only as query embeddings.

The range predicate $B$ is generated in the attribute space with a target selectivity $\sigma$ and a relative tolerance parameter $\mathsf{tol}$.
Following iRangeGraph, we parameterize the target selectivity as $\sigma = 1/2^i$.
Since multi-attribute range predicates often result in relatively small selectivities in practice, we focus on $i \in \{4,6,8\}$, i.e., $\sigma \in \{1/16, 1/64, 1/256\}$.

To instantiate $B$, we first draw a random sample of attribute tuples and, for each attribute, build a sorted list of finite values to support quantile queries.
For each query, we derive per-attribute intervals by selecting lower and upper quantiles on the sampled tuples so that the empirical selectivity of the resulting $B$ lies within $[\sigma(1-\mathsf{tol}),\,\sigma(1+\mathsf{tol})]$.
By default, we set $\mathsf{tol}=0.5$.

\stitle{Equipment.}
All algorithms are implemented in C++, compiled with the \texttt{g++} compiler at \texttt{-O3} optimization level, and run on a Linux machine equipped with an Intel Xeon Gold 6230 CPU @ 2.10\,GHz and 256\,GB of RAM.

\subsection{Query performance}
\label{subsec:exp:query}
\stitle{Overall performance.}
We evaluate query efficiency using queries per second (QPS), defined as the number of RFANNS queries processed per second, and report the trade-off between recall and QPS.
\reffig{exp:overall_q} summarizes the query performance of the three methods on all four datasets.
For each figure, the curve is obtained by varying the exploration factor $\ef$ on a fixed index, yielding a series of points with different recall–QPS combinations.
Both \kh and \rg use a maximum degree bound $M=32$.
In this experiment, we set the cardinality of $B$ to $m$ (i.e., each range predicate constrains all attributes) and fix the target size $k=10$.
We make the following observations:

(1) The two graph-based methods, \kh and \rg, achieve high recall across all datasets and selectivities: on \laion, \ms, and \dblp the recall exceeds $0.99$, and on the more challenging \yt\ it remains above $0.9$.
However, their query throughput shows clear differences.

(2) When $\sigma \in \{1/16,1/64,1/256\}$, \kh\ consistently outperforms \rg\ in QPS at comparable recall levels.
At recall $0.95$, the QPS gains on \laion\ are $1.66\times$, $2.34\times$, and $3.99\times$ over \rg\ as $\sigma$ decreases from $1/16$ to $1/256$; 
on \ms, the corresponding speedups are $1.20\times$, $1.67\times$, and $4.92\times$; 
and on \dblp, they are $1.76\times$, $1.48\times$, and $3.15\times$, respectively.
On the more challenging \yt\ dataset, \kh\ yields $1.68\times$, $8.89\times$, and $38.08\times$ speedups at recall $0.9$ for $\sigma=1/16$, $1/64$, and $1/256$, respectively.
Averaged over \laion, \ms, and \dblp\ and the three selectivities, \kh\ attains an overall speedup of $2.46\times$ in QPS compared with \rg, while on \yt\ the average speedup reaches $16.22\times$.
Overall, the advantage grows more pronounced for lower selectivities and more challenging datasets.

(3) When the target recall is relaxed below $1.0$, both \kh and \rg yield notable speedups over \kw{Prefiltering}.
At recall $0.95$ on \laion, \ms, and \dblp\ (and $0.9$ on \yt), \kh consistently achieves higher QPS than \kw{Prefiltering} on all datasets, with an average speedup of $35.59\times$.
In the same setting, \rg also outperforms \kw{Prefiltering} on all datasets except \yt, with an average speedup of $24.91\times$; 
on \yt, however, \rg falls short of \kw{Prefiltering}, as shown in~\reffig{exp:overall_yt_64} and~\reffig{exp:overall_yt_256}.

\begin{figure}[t!]
\centering
    \begin{minipage}{0.35\textwidth}
        \centering
        \includegraphics[width=0.99\hsize]{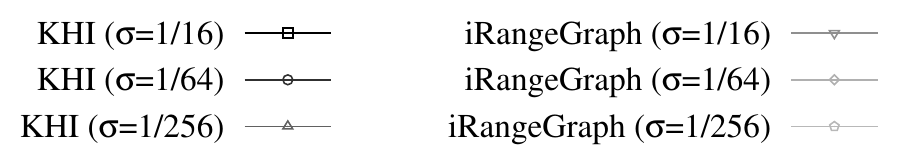}
    \end{minipage}\\
    \begin{subfigure}[b]{0.2\textwidth}
        \centering
        \includegraphics[width=0.99\hsize]{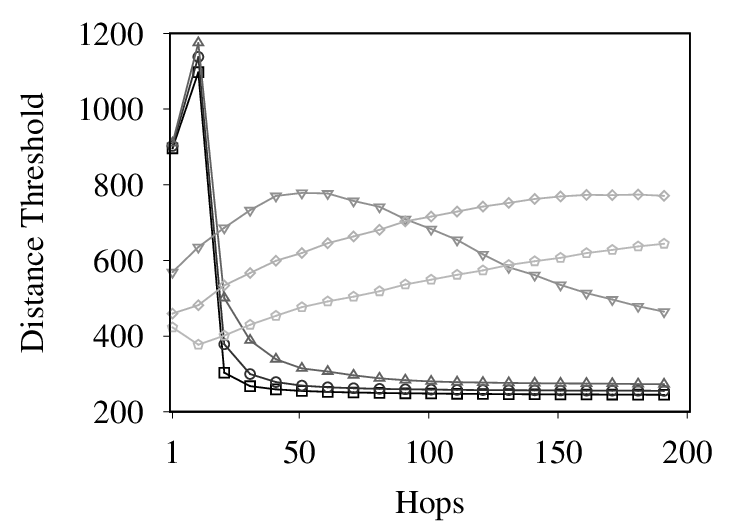}
        \vspace{-2em}
        \caption{\yt}
    \end{subfigure}
    \begin{subfigure}[b]{0.2\textwidth}
        \centering
        \includegraphics[width=0.99\hsize]{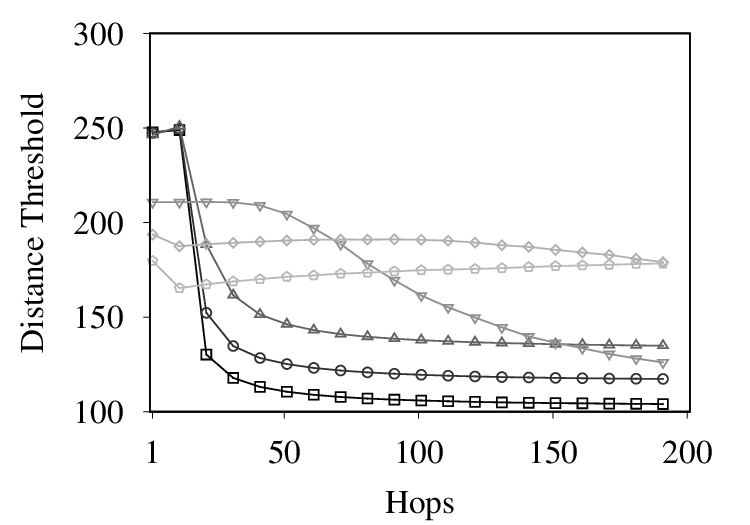}
        \vspace{-2em}
        \caption{\dblp}
    \end{subfigure}\\
    \begin{subfigure}[b]{0.2\textwidth}
        \centering
        \includegraphics[width=0.99\hsize]{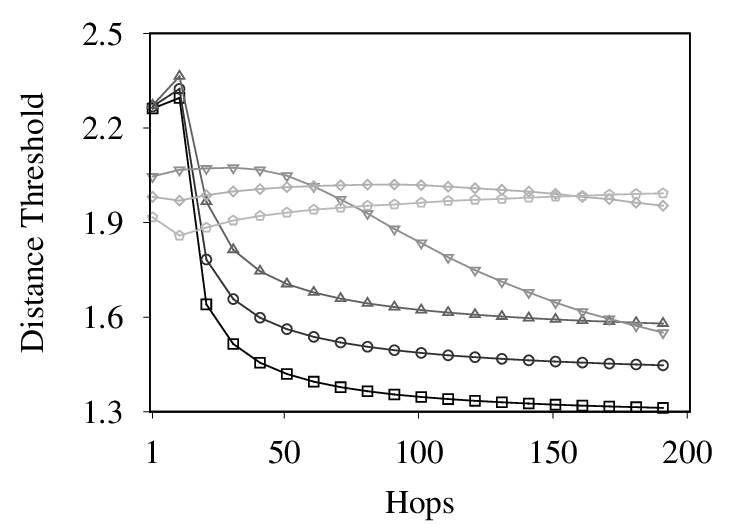}
        \vspace{-2em}
        \caption{\ms}
    \end{subfigure}
    \begin{subfigure}[b]{0.2\textwidth}
        \centering
        \includegraphics[width=0.99\hsize]{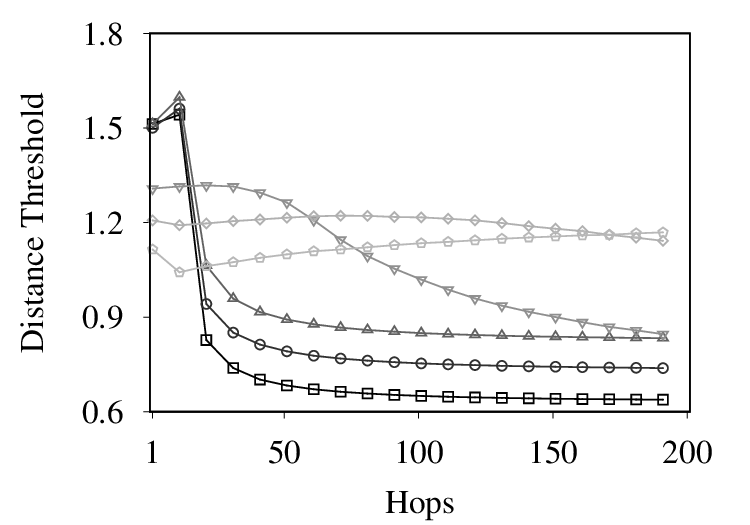}
        \vspace{-2em}
        \caption{\laion}
    \end{subfigure}
    \vspace{-1em}
    \caption{Evolution of distance threshold during search.}
    \Description*{}
    \label{fig:exp:dt}
\end{figure}

\stitle{Convergence of the distance threshold.}
To better understand how the graph-based methods behave, we further examine the evolution of the distance threshold during greedy search.
We measure the search progress in $\kw{hops}$, where one hop corresponds to expanding the neighbors of a candidate object.
At a given hop, we define the distance threshold as the distance from $q$ to the farthest object in $\hat{R}$, where $\hat{R}$ denotes the current set of best-so-far candidates.
In this experiment, we use the same settings as in the overall query performance study.
We report the evolution of the distance threshold at $\ef=300$ for $\sigma \in \{1/16,1/64,1/256\}$.

\reffig{exp:dt} shows how this threshold evolves with the number of hops across all datasets and selectivities.
For \kh, across all datasets and selectivities, the distance threshold decreases rapidly in the first few hops and then quickly stabilizes.
This suggests that \kh\ is able to tighten the distance threshold more aggressively and prune inferior candidates earlier, while still preserving high recall.
In contrast, for \rg, the distance threshold decays much more slowly and often remains high over many hops, especially for smaller selectivities, e.g., $\sigma = 1/64$ and $\sigma = 1/256$.
As discussed in~\refsubsec{motivation}, this behavior stems from its reliance on many out-of-range neighbors.
Thus, \rg\ tends to explore more candidates during search.
These search dynamics are consistent with the higher QPS achieved by \kh, compared with \rg\ at comparable recall levels in~\reffig{exp:overall_q}.

\stitle{Varying target size $k$.}
We study the impact of the target size $k$ on query performance and present the results on \laion as a representative example.
We fix the target recall at $0.95$ and evaluate QPS while varying $k$ from $20$ to $100$, as shown in~\reffig{exp:vary_k}; the case $k=10$ has already been reported in~\reffig{exp:overall_q}.
All other settings follow the overall query performance study.

\begin{figure*}[htb]
\centering
    \vspace{-2em}
    \begin{minipage}{0.59\textwidth}
        \centering
        \includegraphics[width=0.99\hsize]{experiments/query_legend.eps}
    \end{minipage}\\
    \vspace{-1.5em}
    \begin{subfigure}[b]{0.2\textwidth}
        \centering
        \includegraphics[width=0.99\hsize]{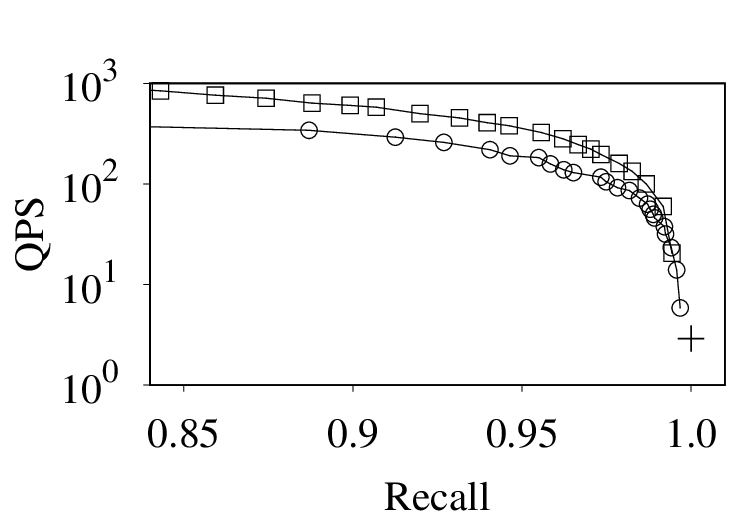}
        \vspace{-2em}
        \caption{$\sigma$=1/16, k=20}
    \end{subfigure}
    \hspace{0.08\textwidth}
    \begin{subfigure}[b]{0.2\textwidth}
        \centering
        \includegraphics[width=0.99\hsize]{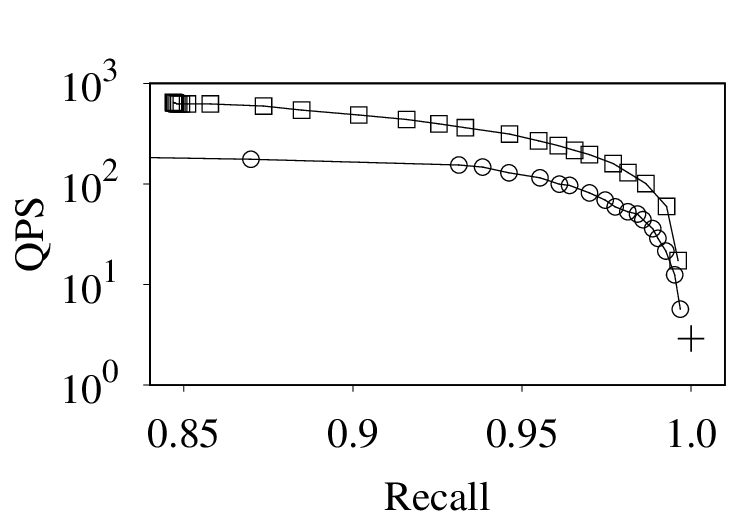}
        \vspace{-2em}
        \caption{$\sigma$=1/16, k=50}
    \end{subfigure}
    \hspace{0.08\textwidth}
    \begin{subfigure}[b]{0.2\textwidth}
        \centering
        \includegraphics[width=0.99\hsize]{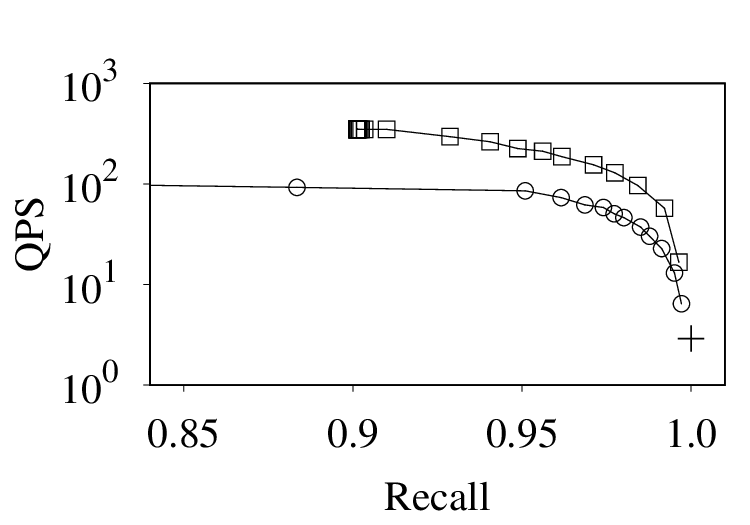}
        \vspace{-2em}
        \caption{$\sigma$=1/16, k=100}
    \end{subfigure}\\
    \begin{subfigure}[b]{0.2\textwidth}
        \centering
        \includegraphics[width=0.99\hsize]{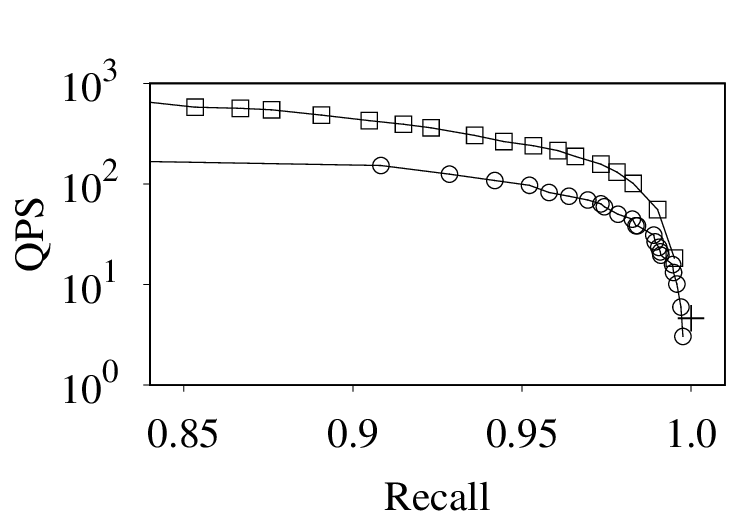}
        \vspace{-2em}
        \caption{$\sigma$=1/64, k=20}
    \end{subfigure}
    \hspace{0.08\textwidth}
    \begin{subfigure}[b]{0.2\textwidth}
        \centering
        \includegraphics[width=0.99\hsize]{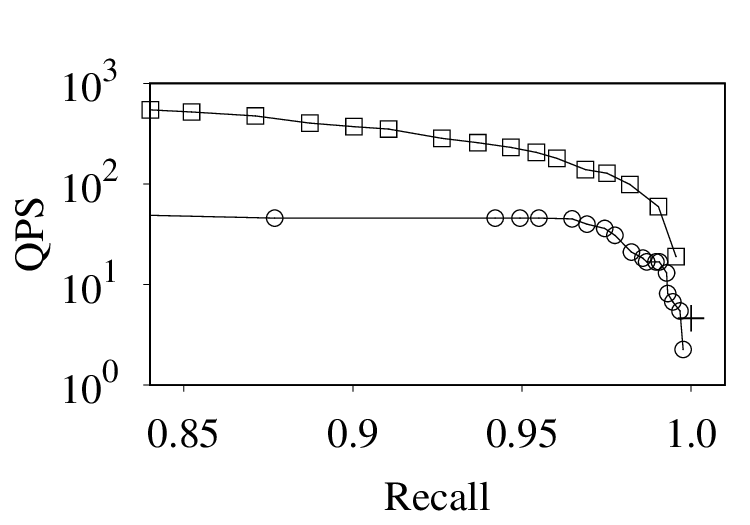}
        \vspace{-2em}
        \caption{$\sigma$=1/64, k=50}
    \end{subfigure}
    \hspace{0.08\textwidth}
    \begin{subfigure}[b]{0.2\textwidth}
        \centering
        \includegraphics[width=0.99\hsize]{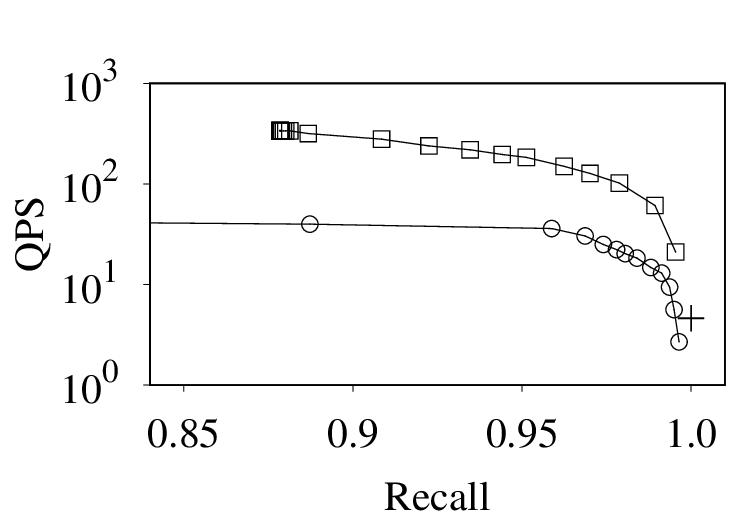}
        \vspace{-2em}
        \caption{$\sigma$=1/64, k=100}
    \end{subfigure}\\
    \begin{subfigure}[b]{0.2\textwidth}
        \centering
        \includegraphics[width=0.99\hsize]{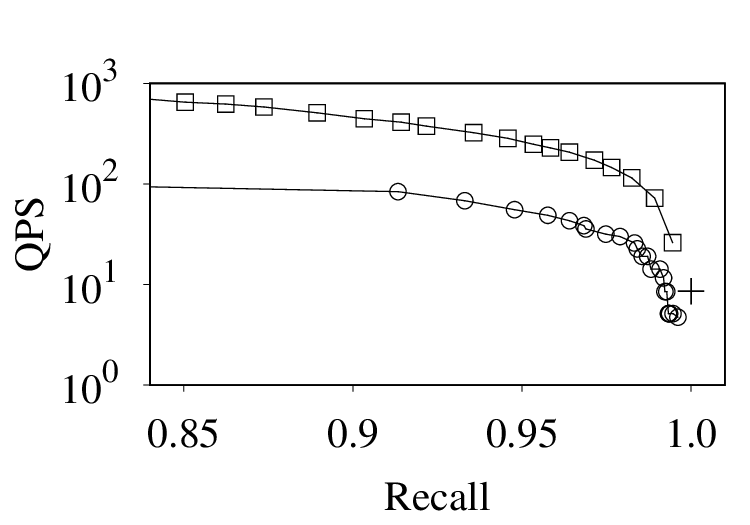}
        \vspace{-2em}
       \caption{$\sigma$=1/256, k=20}
    \end{subfigure}
    \hspace{0.08\textwidth}
    \begin{subfigure}[b]{0.2\textwidth}
        \centering
        \includegraphics[width=0.99\hsize]{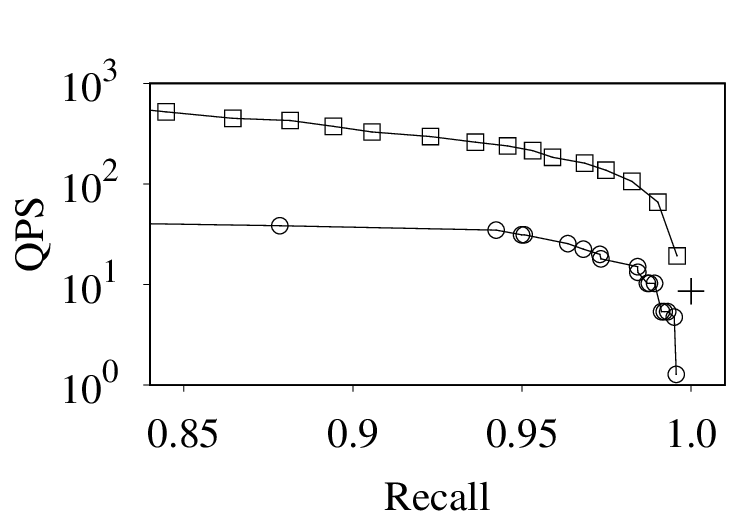}
        \vspace{-2em}
       \caption{$\sigma$=1/256, k=50}
    \end{subfigure}
    \hspace{0.08\textwidth}
    \begin{subfigure}[b]{0.2\textwidth}
        \centering
        \includegraphics[width=0.99\hsize]{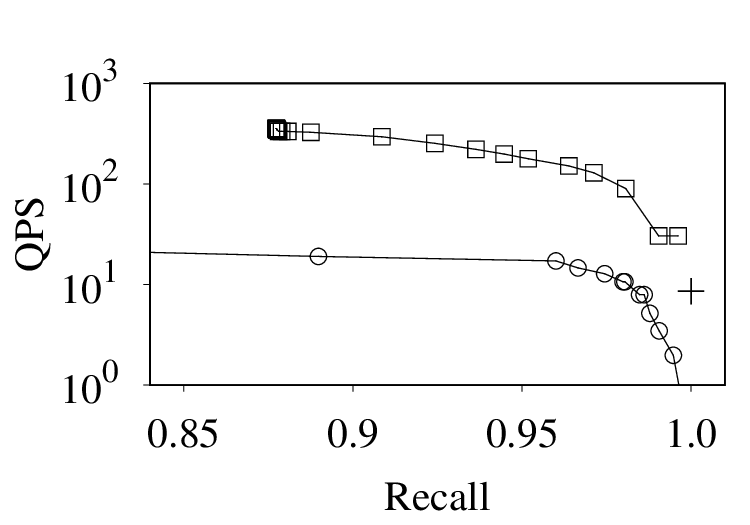}
        \vspace{-2em}
       \caption{$\sigma$=1/256, k=100}
    \end{subfigure}
    \vspace{-1em}
    \caption{Query performance by varying target size $k$.}
    \Description*{}
    \label{fig:exp:vary_k}
    \vspace{-1em}
\end{figure*}

We fix the target recall at $0.95$ and report the QPS achieved by \kh\ and \rg\ on \laion\ while varying $k \in \{10,20,50,100\}$.
When $\sigma = 1/16$, \kh\ achieves speedups over \rg, with QPS improvements of $1.66\times$, $1.91\times$, $2.38\times$, and $2.60\times$ at $k=10$, $20$, $50$, and $100$, respectively.
For $\sigma = 1/64$, the gains become larger, with speedups of about $2.34\times$, $2.51\times$, $4.81\times$, and $5.10\times$ for $k=10$, $20$, $50$, and $100$, respectively.
For the case $\sigma = 1/256$, the advantage of \kh\ is even more pronounced, reaching $3.99\times$, $4.90\times$, $7.20\times$, and $10.52\times$ for $k=10$, $20$, $50$, and $100$, respectively.
Overall, the performance gap between \kh\ and \rg\ widens as $k$ increases, especially under low selectivity.
This effect may be attributed to the fact that larger $k$ forces the search to explore a broader local neighborhood; in this regime, \kh\ benefits from higher-quality in-range neighbors, whereas \rg\ spends more hops on out-of-range candidates, so \kh\ can obtain the $k$ nearest neighbors with fewer expansions and consequently higher QPS.
We also observe that, when the recall is relaxed to $0.95$, both \kh\ and \rg\ achieve substantially higher QPS than \kw{Prefiltering}.
On \laion, averaging over $k \in \{10,20,50,100\}$, their QPS speedups over \kw{Prefil}-\kw{tering} are $62.76\times$ and $24.16\times$, respectively.

\begin{figure}[t!]
\centering
    \vspace{-1em}
    \begin{minipage}{0.49\textwidth}
        \centering
        \includegraphics[width=0.99\hsize]{experiments/query_legend.eps}
    \end{minipage}\\
    \vspace{-2em}
    \begin{subfigure}[b]{0.2\textwidth}
        \centering
        \includegraphics[width=0.99\hsize]{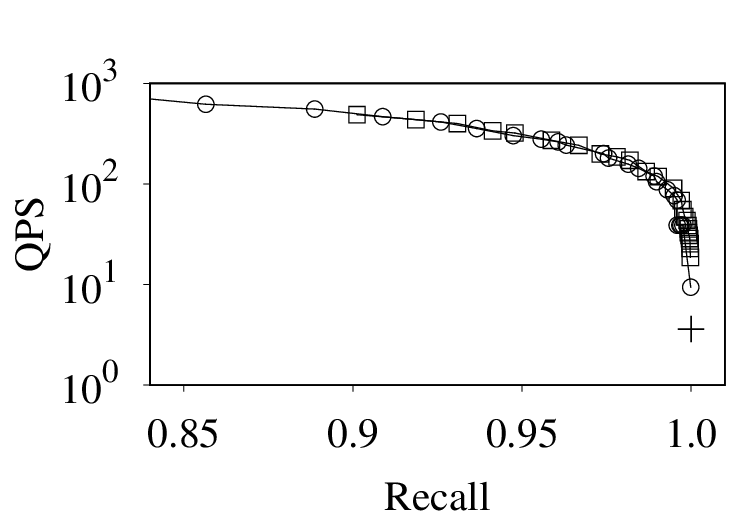}
        \vspace{-2em}
        \caption{$\sigma$=1/16, $|B|$=2}
    \end{subfigure}
    \begin{subfigure}[b]{0.2\textwidth}
        \centering
        \includegraphics[width=0.99\hsize]{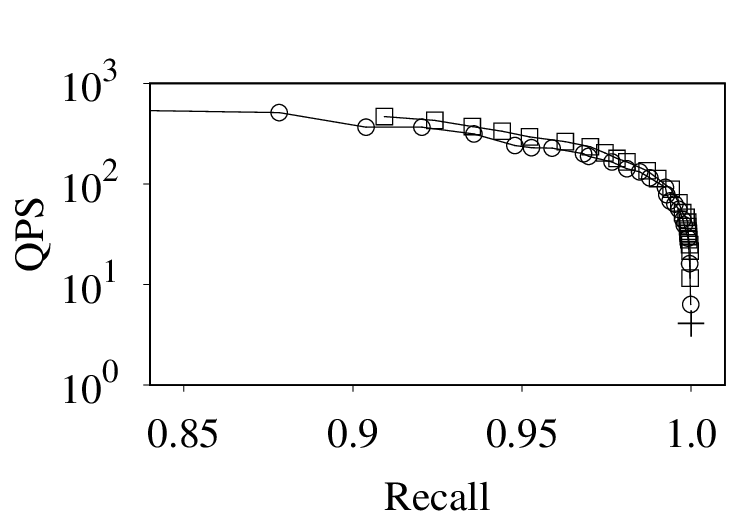}
        \vspace{-2em}
        \caption{$\sigma$=1/16, $|B|$=3}
    \end{subfigure}\\
    \begin{subfigure}[b]{0.2\textwidth}
        \centering
        \includegraphics[width=0.99\hsize]{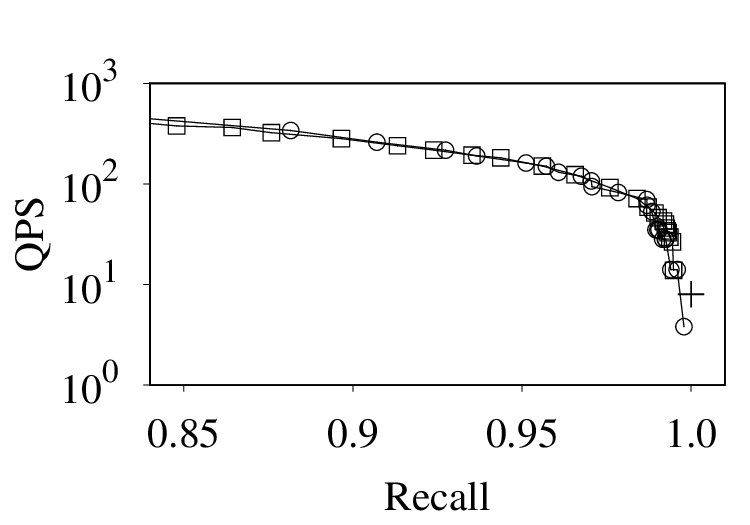}
        \vspace{-2em}
        \caption{$\sigma$=1/64, $|B|$=2}
    \end{subfigure}
    \begin{subfigure}[b]{0.2\textwidth}
        \centering
        \includegraphics[width=0.99\hsize]{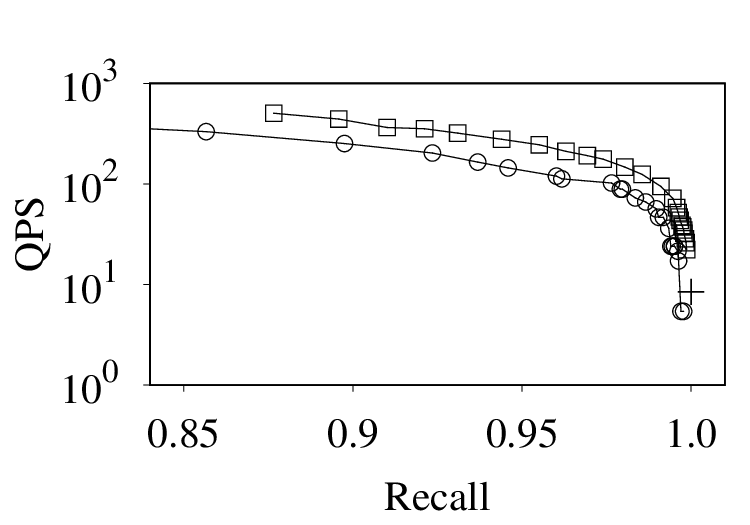}
        \vspace{-2em}
        \caption{$\sigma$=1/64, $|B|$=3}
    \end{subfigure}\\
    \begin{subfigure}[b]{0.2\textwidth}
        \centering
        \includegraphics[width=0.99\hsize]{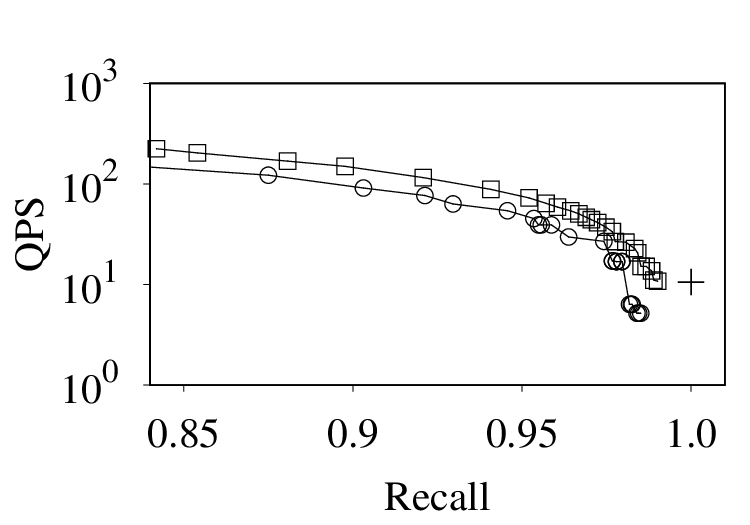}
        \vspace{-2em}
       \caption{$\sigma$=1/256, $|B|$=2}
    \end{subfigure}
    \begin{subfigure}[b]{0.2\textwidth}
        \centering
        \includegraphics[width=0.99\hsize]{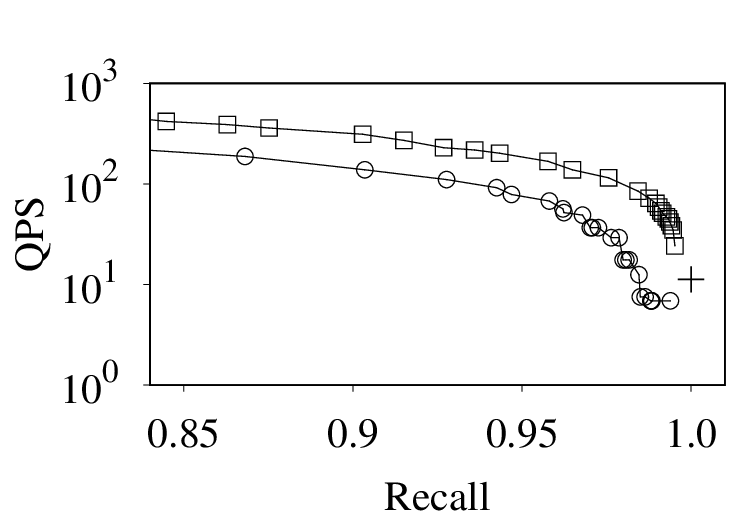}
        \vspace{-2em}
       \caption{$\sigma$=1/256, $|B|$=3}
    \end{subfigure}
    \vspace{-1em}
    \caption{Query performance by varying cardinality of $B$.}
    \Description*{}
    \label{fig:exp:vary_m}
    \vspace{-0.5em}
\end{figure}

\stitle{Varying cardinality of $B$.}
We study the effect of the cardinality of $B$ on query performance.
We report the results on \dblp; the other datasets exhibit similar trends.
We fix the target recall at $0.95$ and evaluate QPS while varying the cardinality of $B$ from $2$ to $4$.
All other settings follow the overall query performance study.
When the cardinality of $B$ is less than $m$, for each query we choose the constrained attributes uniformly at random from the $m$ attributes.
In~\reffig{exp:vary_m}, we plot the cases where the cardinality of $B$ is $2$ and $3$; the case where the cardinality of $B$ equals $m$ has already been reported in~\reffig{exp:overall_q} and is omitted here for brevity.

When $\sigma = 1/16$, \kh\ achieves about $1.06\times$, $1.29\times$, and $1.76\times$ higher QPS than \rg\ when the cardinality of $B$ is $2$, $3$, and $4$, respectively.
For $\sigma = 1/64$, the speedup ranges from roughly parity when the cardinality of $B$ is $2$ to about $1.43\times$ and $1.48\times$ when it is $3$ and $4$.
For $\sigma = 1/256$, the gains become more pronounced, reaching around $1.53\times$, $2.47\times$, and $3.15\times$ for cardinalities $2$, $3$, and $4$, respectively.
These results indicate that, as $B$ involves more attributes, \kh\ increasingly outperforms \rg, especially under low selectivity, highlighting the benefit of its attribute-space partitioning and and the resulting high-quality graphs.
Moreover, compared with the exhaustive \kw{Prefiltering} baseline on \dblp, \kh\ and \rg\ achieve average QPS speedups of $39.46\times$ and $27.81\times$, respectively, across these settings.

\subsection{Index construction}
\label{subsec:exp:idx_cons}
\begin{table}[t]
  \centering
  \caption{Index construction time.}
  \label{tab:index_construction_time}
  \vspace{-1em}
  \begin{tabular}{c|c|c}
    \hline
    Dataset & \kh (s) & \rg (s) \\
    \hline \noalign{\smallskip} \hline
    \yt    & 1,991.75  &  6,675.85 \\ \hline
    \dblp  & 4,642.37  & 13,973.42 \\ \hline
    \ms    & 5,507.91  & 17,841.86 \\ \hline
    \laion & 4,661.85 & 16,250.53 \\
    \hline
  \end{tabular}
\end{table}

\stitle{Index construction time.}
Since single-threaded index construction runs slow on our datasets, we report multi-threaded construction times under a 16-thread configuration with maximum degree bound $M=32$ for both \kh\ and \rg.
As shown in~\reftab{index_construction_time}, across the four datasets, \kh\ can be built in about $2{,}000$–$5{,}500$\,s, whereas \rg\ requires roughly $6{,}700$–$17{,}800$\,s.
These correspond to speedups of $3.35\times$, $3.01\times$, $3.24\times$, and $3.49\times$, respectively, i.e., an average reduction of construction time by about a factor of $3.27$.
This confirms that our parallelization strategy, which combines level-wise and intra-node parallelism, substantially accelerates index construction compared with the pure level-wise parallelism used in \rg.

\begin{table}[t]
  \centering
  \caption{Index size.}
  \label{tab:index_size}
  \vspace{-1em}
  \begin{tabular}{c|c|c}
    \hline
    Dataset & \kh (GB) & iRangeGraph (GB) \\
    \hline \noalign{\smallskip} \hline
    \yt    &  3.27 &  2.89 \\ \hline
    \dblp  &  9.74 & 10.14 \\ \hline
    \ms    & 18.49 & 16.89 \\ \hline
    \laion & 15.85 & 13.97 \\
    \hline
  \end{tabular}
\end{table}

\stitle{Index size.}
\reftab{index_size} reports the index sizes on all datasets, with $M=32$ for both indexes.
Overall, the two indexes occupy comparable space: across all datasets, the size of \kh\ differs from that of \rg\ by less than $15\%$.
Although our partitioning tree $T$ is not strictly balanced, the skew-aware splitting strategy keeps most objects at moderate depths, so that only a small fraction reside in deeper levels.
As a result, the additional nodes and their filtered HNSW graphs incur only limited extra space.
Further details on the empirical tree height are provided in~\cite{khi-tech-report}.

\section{Related Work}
\label{sec:related_work}

\stitle{Approximate nearest neighbor search.}
Approximate nearest neighbor search in high-dimensional Euclidean spaces has been extensively studied.
Existing methods can be broadly categorized into graph-based~\cite{malkov2014approximate,harwood2016fanng,malkov2018efficient,li2019approximate,fu2019fast,jayaram2019diskann,chen2021spann,xie2025graph}, quantization-based~\cite{jegou2010product,babenko2014additive,babenko2014inverted,andre2016cache,guo2020accelerating,gao2024rabitq}, and hashing-based~\cite{datar2004locality,tao2010efficient,gan2012locality,sun2014srs,huang2015query};
we refer readers to recent tutorials~\cite{echihabi2021new,qin2021high}, surveys~\cite{li2019approximate,wang2021comprehensive,wang2023graph}, and benchmark studies~\cite{aumuller2020ann,aumuller2023recent} for a comprehensive overview of this literature.
Among these categories, graph-based methods have demonstrated particularly strong performance:
HNSW~\cite{malkov2018efficient}, NSG~\cite{fu2019fast}, and DiskANN~\cite{jayaram2019diskann} have been widely adopted in industry~\cite{wang2021milvus}.
More recently, several GPU-accelerated graph-based ANN methods~\cite{zhao2020song,groh2022ggnn,yu2022gpu,ootomo2024cagra} have been proposed to increase throughput by exploiting massive parallelism.
These methods are highly optimized for ANN without attribute constraints.
In contrast, we focus on extending ANN to handle multi-attribute range filters efficiently, a setting not addressed in the above line of work.

\stitle{Range-filtering nearest neighbor search.}
Motivated by the practical importance of attribute-filtered ANN queries, many algorithms and systems have been proposed~\cite{cai2024navigating,engels2024approximate,gollapudi2023filtered,mohoney2023high,patel2024acorn,wang2023efficient,wei2020analyticdb,zuo2024serf,xu2024irangegraph,xie2025beyond,wang2021milvus}.
We refer readers to the survey~\cite{lin2025survey} for a comprehensive overview.
Across different application scenarios, attribute predicates can be broadly grouped into equality predicates, numeric range predicates, and more general comparison predicates.
For equality predicates~\cite{cai2024navigating,gollapudi2023filtered,wang2023efficient}, objects carry categorical labels and the goal is to retrieve the $k$ nearest neighbors whose labels exactly match the query label(s). This setting differs from ours, where queries impose multi-dimensional numeric constraints.
Range predicates enforce lower and/or upper bounds on numeric attributes. Most existing RFANNS methods for range predicates focus on single-attribute filters~\cite{engels2024approximate,zuo2024serf,xu2024irangegraph}, where the condition is specified on a single numeric dimension.
In contrast, we study RFANNS with multi-attribute numeric ranges, filling this gap.
For general comparison predicates, several systems support attribute-filtered ANN queries over heterogeneous attributes, including numeric, categorical, and other application-specific fields~\cite{mohoney2023high,patel2024acorn,wang2021milvus,wei2020analyticdb,xie2025beyond}.
Prior work has shown that different attribute types exhibit markedly different properties and that strong performance typically requires index designs tailored to a specific predicate class~\cite{xu2024irangegraph}.
Since our method is specifically designed for numeric range predicates, we therefore compare primarily against specialized RFANNS indexes rather than these general-purpose systems.
In addition, GPU-based methods for attribute-filtered ANN queries have recently been explored~\cite{xi2025vecflow}; these techniques are largely orthogonal to our contribution and could in principle be combined with our index.

\stitle{Dynamic range-filtering nearest neighbor search.}
Several wo-rks~\cite{jiang2025digra,zhang2025efficient,peng2025dynamic} have been proposed to support RFANNS in dynamic settings.
DIGRA~\cite{jiang2025digra} organizes objects in a B-tree over attributes and maintains HNSW graphs at tree nodes, both of which admit efficient maintenance under insertions and deletions.
DSG~\cite{peng2025dynamic} proposes a dynamic segment graph structure, while requiring only $O(\log n)$ new edges in expectation per insertion.

\section{Conclusion}
\label{sec:conclusion}

In this paper, we study RFANNS in high-dimensional Euclidean spaces with multi-attribute numeric range predicates, a setting that is increasingly important in modern vector databases but has been insufficiently explored by existing work.
We propose \kh, a tailored RFANNS index that combines an attribute-space partitioning tree with filtered HNSW graphs anchored at tree nodes.
The partitioning tree employs a skew-aware splitting strategy, which adapts to attribute distributions while still admitting provable bounds on the tree height.
On top of this structure, we design an efficient query algorithm that relies only on in-range neighbors yet still achieves high recall and strong query performance.

We conduct extensive experiments on four real-world datasets and compare \kh with \rg\ and a prefiltering baseline.
The results demonstrate that \kh consistently achieves better query throughput than the baselines.
Specifically, \kh achieves an average QPS speedup of $2.46\times$ over \rg\ on \laion, \dblp, and \ms, and $16.22\times$ on the more challenging \yt, with gains further increasing for smaller selectivities, larger $k$, and higher predicate cardinality.
When the recall is relaxed to $0.95$ (or $0.9$ on \yt), \kh also consistently outperforms \kw{Prefiltering} in QPS across all datasets, with an average speedup of $35.59\times$.
In terms of index construction, \kh is on average $3.27\times$ faster than \rg\ under a 16-thread configuration, while incurring at most roughly a $15\%$ space overhead compared with \rg.
Future work includes supporting fully dynamic workloads and exploring GPU implementations for large-scale deployments.


\clearpage
\balance
\bibliographystyle{ACM-Reference-Format}
\bibliography{ref}

\clearpage




\end{document}